\newenvironment{proof}{\paragraph*{Proof:}}{\hfill$\square$}
\def\@normalsize{\@setsize\normalsize{13pt}\xipt\@xipt
	\abovedisplayskip 11pt plus3pt minus6pt
	\belowdisplayskip \abovedisplayskip
	\abovedisplayshortskip \z@ plus3pt
	\belowdisplayshortskip 6.6pt plus3.5pt minus3pt} 
\def\small{\@setsize\small{12pt}\xipt\@xipt
	\abovedisplayskip 10pt plus2pt minus5pt
	\belowdisplayskip \abovedisplayskip
	\abovedisplayshortskip \z@ plus3pt
	\belowdisplayshortskip 6pt plus3pt minus3pt
	\def\@listi{\topsep 6pt plus 2pt minus 2pt
		\parsep 3pt plus 2pt minus 1pt
		\itemsep \parsep}}
\def\footnotesize{\@setsize\footnotesize{10pt}\ixpt\@ixpt
	\abovedisplayskip 8pt plus 2pt minus 4pt
	\belowdisplayskip \abovedisplayskip
	\abovedisplayshortskip \z@ plus 1pt
	\belowdisplayshortskip 4pt plus 2pt minus 2pt
	\def\@listi{\topsep 4pt plus 2pt minus 2pt
		\parsep 2pt plus 1pt minus 1pt
		\itemsep \parsep}}
\def\scriptsize{\@setsize\scriptsize{9.5pt}\viiipt\@viiipt}
\def\tiny{\@setsize\tiny{7pt}\vipt\@vipt}
\def\large{\@setsize\large{14pt}\xiipt\@xiipt}
\def\Large{\@setsize\Large{18pt}\xivpt\@xivpt}
\def\LARGE{\@setsize\LARGE{22pt}\xviipt\@xviipt}
\def\huge{\@setsize\huge{25pt}\xxpt\@xxpt}
\def\Huge{\@setsize\Huge{30pt}\xxvpt\@xxvpt}
\def\section{\@startsection {section}{1}{\z@}%
	{-1.5\baselineskip plus-1pt minus-3pt}{1\baselineskip plus1pt minus2pt}%
	{\centering\normalsize\bf}}
\def\subsection{\@startsection{subsection}{2}{\z@}%
	{-1\baselineskip plus-1pt minus-2pt}{1\baselineskip plus1pt minus2pt}%
	{\normalsize\sc\noindent}}
\def\subsubsection{\@startsection{subsubsection}{3}{\z@}%
	{-1\baselineskip plus-1pt minus-2pt}{1sp}{\normalsize\it\noindent}}
\def\paragraph{\@startsection{paragraph}{4}{\z@}%
	{1\baselineskip plus1pt minus2pt}{-1em}{\normalsize\it\noindent}}
\let\subparagraph=\paragraph
\def\tableofcontents{\@restonecolfalse\if@twocolumn\@restonecoltrue
	\onecolumn\fi\OSIDcont\@starttoc{con}\if@restonecol\twocolumn\fi}
\def\l@section{\@dottedtocline{1}{0em}{.66em}}
\def\thebibliography#1{\section*{{Bibliography}\@mkboth
		{BIBLIOGRAPHY}{BIBLIOGRAPHY}}\footnotesize\rm\list
	{[\arabic{enumi}]}{\settowidth\labelwidth{[#1]}\leftmargin\labelwidth
		\advance\leftmargin\labelsep\usecounter{enumi}}
	\def\newblock{\hskip .11em plus .33em minus -.07em}
	\sloppy\clubpenalty4000\widowpenalty4000
	\sfcode`\.=1000\relax}
\def\ps@myheadings{\let\@mkboth\@gobbletwo
	\def\@oddhead{\hfil{\footnotesize\rm\rightmark}\hfil}
	\def\@evenhead{\hfil{\footnotesize\rm\leftmark}\hfil}
	\def\@oddfoot{\hfil{\footnotesize\sf\artid-\thepage}\hfil}
	\def\@evenfoot{\hfil{\footnotesize\sf\artid-\thepage}\hfil}
	\def\sectionmark##1{}\def\subsectionmark##1{}}
\newcounter{paPer}     %
\def\ps@osiD{\let\@mkboth\@gobbletwo
	\def\@oddfoot{\hfil{\footnotesize\sf\artid-\thepage}\hfil}
	\def\@evenhead{}\let\@evenfoot\@oddfoot}
\def\cite{\@ifnextchar [{\@tempswatrue\@Rcitex}{\@tempswafalse\@Rcitex[]}}
\def\@Rcitex[#1]#2{\if@filesw\immediate\write\@auxout{\string\citation{#2}}\fi
	\def\@citea{}\@cite{\@for\@citeb:=#2\do
		{\@citea\def\@citea{,\penalty\@m\,}\@ifundefined
			{b@\the\value{paPer}R\@citeb}{{\bf ?}\@warning
				{Citation `\@citeb' on page \thepage \space undefined}}%
			\hbox{\csname b@\the\value{paPer}R\@citeb\endcsname}}}{#1}}
\long\def\@caption#1[#2]#3{\par\addcontentsline{\csname
		ext@#1\endcsname}{#1}{\protect\numberline{\csname
			the#1\endcsname}{\ignorespaces #2}}\begingroup
	\@parboxrestore
	\small                                        %    \normalsize
	\@makecaption{\csname fnum@#1\endcsname}{\ignorespaces #3}\par
	\endgroup}
\let\Rlabel=\label
\let\Rbibitem=\bibitem
\let\Rref=\ref
\let\Rpageref=\pageref
\def\label#1{\expandafter\Rlabel{\the\value{paPer}R#1}}
\def\bibitem#1{\expandafter\Rbibitem{\the\value{paPer}R#1}}
\def\ref#1{\expandafter\Rref{\the\value{paPer}R#1}}
\def\pageref#1{\expandafter\Rpageref{\the\value{paPer}R#1}}
\def\thesection{\arabic{section}.}
\def\YYMm{\rule{0ex}{4em}}
\newtoks\TITsi
\newtoks\TITsii
\def\title#1{\def\TITs{\LARGE{\raggedright\noindent\YYMm #1%
			\vskip8pt\par}}}
\def\author#1{\autMM{#1}\def\LHD{#1}}
\def\and{{\rm\lowercase{and}}}
\def\autMM#1{\TITsii={\vskip10pt\par\normalsize\rm\noindent #1\par}%
\TITsi=\expandafter{\TITs}\edef\TITs{\the\TITsi\the\TITsii}}
\def\address#1{\TITsii={\vskip6pt\par\footnotesize\sl\noindent #1\par}%
\TITsi=\expandafter{\TITs}%
\edef\TITs{\the\TITsi\the\TITsii}}
\def\received#1{\TITsii={\vskip10pt\par\small\rm\noindent(Received: #1)\par}%
\TITsi=\expandafter{\TITs}\edef\TITs{\the\TITsi\the\TITsii}}
\def\headtitle#1{\def\RHD{#1}}
\def\headauthor#1{\def\LHD{#1}}
\def\abst{{\bf Abstract.}}
\def\abstract#1{\TITs
\vskip15pt\par\noindent
{\footnotesize{\abst~} #1\vskip3pt\par}
\markright{\RHD}
\markboth{\LHD}{\RHD}}
\def\startpaper{%
\cleardoublepage
\setcounter{section}{0}
\stepcounter{paPer}
\setcounter{equation}{0}
\setcounter{footnote}{0}
\setcounter{figure}{0}
\setcounter{table}{0}
\def\theequation{\arabic{equation}}
\def\thefootnote{\arabic{footnote}}
\setcounter{defn}{0}
\setcounter{thm}{0}
\setcounter{lem}{0}
\setcounter{prop}{0}
\setcounter{rem}{0}
\thispagestyle{osiD}}
\def\OSIDcont{\cleardoublepage\thispagestyle{empty}
\markright{}\markboth{}{}
\normalsize\rm
%      \vspace*{-3em}
%      \addtolength{\baselineskip}{-0.35pt}
\hspace*{\fill}{\large\rm
Contents of the Volume \Volume, Number \Number}\hspace*{\fill}
\par\vspace{1.5em}
\par\noindent}
\def\endpaper{\expandafter\label{\the\value{paPer}OpSy}}
\def\1{{\mathchoice{\rm 1\mskip-4mu l}{\rm 1\mskip-4mu l}%
{\rm 1\mskip-4.5mu l}{\rm 1\mskip-5mu l}}}
\def\varkappa{\mbox{\bBB\char 123}}
\def\longhookrightarrow{\lhook\joinrel\relbar\joinrel\rightarrow}
\def\longhookUp{\lower6pt\hbox{\rotatebox{90}{$\longhookrightarrow$}}}
\def\Tr{\mathop{\rm Tr}}
\newtheorem{thm}{\rm THEOREM}
\newtheorem{prop}{\rm PROPOSITION}
\newtheorem{defn}{\rm DEFINITION}
\newtheorem{rem}{\it Remark}
\def\theequation{\thesection\arabic{equation}}
\def\Myskip{\setlength{\baselineskip}{13pt}}
\def\text#1{\quad\mbox{\rm  #1 }\quad}
\def\DOInumber{}
\begin{document}
	\emergencystretch 3em

\def\artid{0000001}
\def\Volume{15}
\def\Number{1}
\def\Year{2008}
\setcounter{page}{1}

\def\DOInumber{}

\startpaper

\newcommand{\Ker}{\mathrm{Ker}}
\newcommand{\Mn}{M_n(\mathbb{C})}
\newcommand{\Mk}{M_k(\mathbb{C})}
\newcommand{\id}{\mbox{id}}
\newcommand{\ot}{{\,\otimes\,}}
\newcommand{{\Cd}}{{\mathbb{C}^d}}
\newcommand{\sbsigma}{{\mbox{\scriptsize \boldmath $\sigma$}}}
\newcommand{\sbalpha}{{\mbox{\scriptsize \boldmath $\alpha$}}}
\newcommand{\sbbeta}{{\mbox{\scriptsize \boldmath $\beta$}}}
\newcommand{\bsigma}{{\mbox{\boldmath $\sigma$}}}
\newcommand{\balpha}{{\mbox{\boldmath $\alpha$}}}
\newcommand{\bbeta}{{\mbox{\boldmath $\beta$}}}
\newcommand{\bmu}{{\mbox{\boldmath $\mu$}}}
\newcommand{\bnu}{{\mbox{\boldmath $\nu$}}}
\newcommand{\ba}{{\mbox{\boldmath $a$}}}
\newcommand{\bb}{{\mbox{\boldmath $b$}}}
\newcommand{\sba}{{\mbox{\scriptsize \boldmath $a$}}}
\newcommand{\MD}{\mathfrak{D}}
\newcommand{\sbb}{{\mbox{\scriptsize \boldmath $b$}}}
\newcommand{\sbmu}{{\mbox{\scriptsize \boldmath $\mu$}}}
\newcommand{\sbnu}{{\mbox{\scriptsize \boldmath $\nu$}}}
\def\oper{{\mathchoice{\rm 1\mskip-4mu l}{\rm 1\mskip-4mu l}%
		{\rm 1\mskip-4.5mu l}{\rm 1\mskip-5mu l}}}
\def\<{\langle}
\def\>{\rangle}
\def\theequation{\thesection\arabic{equation}}
\newcommand\sbullet[1][.64]{\mathbin{\vcenter{\hbox{\scalebox{#1}{$\bullet$}}}}}

\thispagestyle{plain}

\title{Bridging Classical and Quantum Worlds: Maps, States, and Evolutions}
\author{Daniele Amato$^{1,2,*}$, Paolo Facchi$^{1,2}$, and Giuseppe Marmo$^{3,4}$}
\address{$^1$Dipartimento di Fisica, Universit\`a di Bari, I-70126 Bari, Italy} 
\address{$^2$INFN, Sezione di Bari, I-70126 Bari, Italy}
\address{$^3$Dipartimento di Fisica “E. Pancini”, Universit{\`a} di Napoli Federico II, Naples, Italy}
\address{$^4$INFN-Sezione di Napoli, Naples, Italy}
\address{$^*$Corresponding author: \href{mailto:daniele.amato@ba.infn.it}
{\texttt{daniele.amato@ba.infn.it}}}
\headauthor{D. Amato, P. Facchi, and G. Marmo}
\headtitle{Bridging Classical and Quantum Worlds}
\received{\today}
%\listas{D. Chru\'sci\'nski and A. Kossakowski}{Rotationally Invariant Multipartite States}

\abstract{In this work, we present several aspects of the interplay between classical and quantum theories. After reviewing the equivalence between positivity and complete positivity in the commutative setting, we introduce and analyze intermediate notions that interpolate between these two properties for linear maps on the space of operators on a Hilbert space, highlighting their natural connections to quantum dynamics and entanglement theory. Finally, we discuss possible constructions that allow one to derive quantum states and evolutions from classical counterparts, and vice versa.}

\Myskip

%\pacs{03.65.Ud, 03.67.-a}

%\SMP
\section{Introduction}

\label{intro_sec}
%The interplay between classical and quantum realms is a natural issue raised already at the birth of quantum theory. 
Classical and quantum theories exhibit various differences both formally and conceptually. 
%one expects the classical theory is a particular case of the quantum one, hence it should be recovered in some suitable limit. 
One of the main features of quantum theory is that the algebra of observables is non-commutative, implying the possible incompatibility between the measurements on a quantum system and, from a dynamical perspective, the need of complete positivity in the description of quantum dynamics. 
%see Section~\ref{open_dyn_sec}.  

Positive and completely positive maps play a central role in the theory of C$^\ast$-algebras, with profound implications in both pure mathematics and quantum physics. A positive map between C$^\ast$-algebras is a linear map that sends positive elements into positive elements, preserving the order structure of the algebra. Although positivity is a natural and minimal requirement, it is often insufficient to guarantee desirable mathematical or physical properties. %or to describe physical processes in a suitable way, cf. section~\ref{open_dyn_sec} for more details. 
This motivates the study of the special subclass of completely positive maps, for which positivity holds even if we extend the map to matrices of arbitrary order over the algebra by tensoring it with the identity map, cf. Definition~\ref{CP_def}. Importantly, the set of completely positive maps involves states and conditional expectations, which are the building blocks of quantum probability~\cite{accardi1981topics}, as well as representations in terms of which one can express any completely positive map, up to enlarging the algebra, via Stinespring's dilation theorem~\cite{stinespring1955positive}. The latter result is one of the various characterizations of complete positivity which cannot be easily extended to positive maps, whose structure is still unclear even in the finite-dimensional case. However, if either the initial $C^\ast$-algebra or the target one is commutative, then it is well known that positivity and complete positivity are equivalent notions~\cite{arveson1969,stinespring1955positive}. In particular, this means that there is no need to introduce complete positivity for classical dynamical systems. Instead, complete positivity is stronger than positivity if both the initial and target $C^\ast$-algebras are noncommutative, as illustrated by the transposition map~\cite{arveson1969,tomiyama1983transpose}, see also~\cite{Choi_1972,stinespring1955positive} for other examples.

The connection between classical and quantum frameworks can be also studied from a different perspective, which is the so-called quantum-to-classical transition, namely, the emergence of classically observable features in the microscopic quantum world~\cite{schlosshauer2007decoherence}. The latter issue can be attacked in a variety of ways, for instance an approach promoted by De Broglie and Bohm~\cite{bohm1952suggested,bohm1952suggested2,de1927mecanique} and further developed in the next years~\cite{haba1998classical,hepp1974classical} consists of obtaining the classical description from the quantum one in the semiclassical limit $\hbar \mapsto 0$. Alternatively, the quantum-to-classical transition problem can be also addressed from a dynamical point of view~\cite{gell1993classical}, which may be used to model e.g.\ a measurement~\cite{jauch1964problem,von2018mathematical}. In particular, classical behaviour of an open quantum system may emerge in the large-time limit, once dissipation and decoherence are complete~\cite{alipour2017dynamical,amato2023asymptotics,Amato2025,AFK_OSID,blanchard1993interaction,chruscinski2012observables}. Finally, observe that seminal ideas such as Koopman's formalism~\cite{koopman1931}, phase-space quantization~\cite{GROENEWOLD1946,moyal1949quantum,weyl1927quantenmechanik,wigner1932quantum}, and Wentzel–Kramers–Brillouin (WKB) approximation~\cite{brillouin1926mecanique,kramers1926wellenmechanik,wentzel1926verallgemeinerung}, %developed at the dawn of quantum mechanics, 
as well as more recent works~\cite{man1997spin,man2008semigroup} are based on the use of standard classical (quantum) tools in the quantum (classical) realm. This idea passes through, for example, the construction of a quantum (classical) state or dynamics from a classical (quantum) one, which will be explored in some depth in Section~\ref{class_quant_sec}.
%\brev{Koopman, phase-space quantization picture, WKB approximation}

In this paper, we shall overview several aspects of the interplay between classical and quantum theories. Specifically, the work is organized as follows: classical and quantum Markov chains, ruling discrete-time classical and quantum stochastic evolutions, will be recalled in  Section~\ref{Markov_sec}, while  positivity, complete positivity, and various properties lying between them will be discussed in Section~\ref{pos_compl_pos_sec}, together with their relation with entanglement theory. Afterwards,  we shall present in Section~\ref{class_quant_sec} how to embed a classical dynamics into a quantum one and, conversely, how to reduce quantum evolutions into classical ones. Finally, we draw the conclusions of the work in Section~\ref{concl_sec}.

%\textcolor{red}{There are two main results from this paper. First, we will give the general structure of the asymptotic dynamics of a quantum channel. Second, we will completely characterize the Choi-Effros decoherence-free algebra, which is the natural algebra in the asymptotic limit.}

\section{Markov chains}
\label{Markov_sec}
In this section we will overview the role that (complete) positivity plays in the description of stochastic evolutions.  The next subsection covers the classical case, while quantum stochastic dynamics will be discussed in Subsection~\ref{open_dyn_sec}.

\subsection{Classical case}
\label{classic_sec}
 The goal of the present subsection is to describe a classical discrete-time stochastic evolution on a finite space of events or states $\Omega = \{ 1, \dots , d \}$. First, \textit{observables} are given by real vectors $\vec{x} \in \mathbb{R}^d$, forming the Hermitian part of the commutative $C^\ast$-algebra $\mathcal{A} = (\mathbb{C}^d , \sbullet , \ast , \| \cdot \|)$, where 
\begin{align}
 \vec{x} \sbullet  \vec{y} &:= (x_1 y_1, x_2 y_2, \dots, x_d y_d)^T =(x_i y_i)_{i=1}^d, \\
\vec{x}^{\,\ast} &:= (x_{i}^\ast)_{i=1}^d, \qquad \| \vec{x} \| :=  \max_{i \in \{ 1, \dots , d \} } |x_i|.
\end{align}
In particular, any positive observable $\vec{x}$, or shortly $\vec{x} \geqslant 0$, is of the form $\vec{x} = \vec{y}^{\,\ast} \sbullet \vec{y}$ for some $\vec{y} \in \mathbb{C}^d$.
Moreover, a \textit{state} $\varphi$, defined as a linear functional $\varphi : \mathcal{A} \mapsto \mathbb{C}$ on $\mathcal{A}$ satisfying the properties
\begin{equation}
\vec{x} \geqslant 0 \Rightarrow \varphi(\vec{x}) \geqslant 0, \quad \varphi(\vec{e}) = 1,  
\end{equation}
with $\vec{e} = (1, \dots , 1)^T$ being the identity vector of $\mathbb{C}^d$,
is uniquely associated with a probability vector $\vec{p} = (p_i)_{i=1}^d$, with $\vec{p}\geqslant 0$ and $\sum_{i=1}^d p_i=1$, via 
\begin{equation}
  \varphi(\vec{x}) = \vec{p} \cdot  \vec{x} := \sum_{i=1}^d p_i x_i  , \quad \vec{x} \in \mathbb{C}^d,
\end{equation}
where $\vec{x} \cdot \vec{y}$ denotes the Euclidean scalar product of two vectors $\vec{x}, \vec{y} \in \mathbb{C}^d$. Therefore, without possibility of confusion, we will often refer to a probability vector $\vec{p} = (p_i)_{i=1}^d$ as a state.  
 Physically, if $ \vec{x} \in \mathbb{R}^d$ is an observable and $\vec{p}$ a state, then we can interpret the $i$-th component $p_i$ of $\vec{p}$ as the probability to obtain after a measurement the $i$-th possible value $x_i$ of $\vec{x}$ and, consequently, $\varphi(\vec{x})$ as the expectation value of $\vec{x}$ over the state $\vec{p}$. 
 
 That being said, under the homogeneity and Markovian assumptions, a classical stochastic dynamics on $\Omega$ is described by a \textit{classical Markov chain} on $\Omega$, totally encoded into a \textit{row-stochastic matrix} $S = (S_{ij})_{i , j=1}^d$ defined by~\cite{bremaud2013markov}
\begin{align}
S_{ij} \geqslant 0, \quad i,j=1, \dots , d, \label{pos_cond}\\
\sum_{j=1}^d S_{i j} = 1, \quad i=1,\dots , d. \label{norm_row-stoch}
\end{align}
The matrix $S$ describes the unit-time evolution of observables, $\vec{x}\mapsto S \vec{x}$, with the standard row-by-column matrix multiplication being understood, and it is often called the transition matrix of the Markov chain.
Moreover, as a consequence of the homogeneity assumption, if $\vec{x}_0$ is the observable at time $t=0$, then the evolved one $\vec{x}_n$ at time $t=n \geqslant 1$ reads
\begin{equation}
\vec{x}_0 \in \mathbb{R}^d\mapsto \vec{x}_n= S^n \vec{x}_0 \in \mathbb{R}^d,
\end{equation}
so the classical Markov chain reduces to the discrete-time semigroup $(S^n)_{n \in \mathbb{N}}$ induced by $S$ via the composition operation $(ST)\vec{x} := S(T\vec{x})$ with $S,T \in B(\mathbb{C}^d)$ and $\vec{x} \in \mathbb{C}^d$. Notice that the first property~\eqref{pos_cond} is known as (entry-wise) \textit{positivity}, while the second property~\eqref{norm_row-stoch} defining $S$ is equivalent to the \textit{unitality} condition $S \vec{e} = \vec{e}$, namely, the unit (trivial) observable $\vec{e}$ is a fixed point of the evolution. Moreover, the positivity property~\eqref{pos_cond} implies that $S_{ij} \in \mathbb{R}$ for all $i,j = 1, \dots , d$, which guarantee that $S \mathbb{R}^d \subseteq \mathbb{R}^d$, i.e., $S$ maps observables into observables. 

In order to understand even better the properties~\eqref{pos_cond}--\eqref{norm_row-stoch} of the transition matrix $S$, let us take into account the adjoint dynamics, involving the states and defined via the equality of the unit-time evolved expectation values 
\begin{equation}
\label{dual_cond}
\vec{p} \cdot (S\vec{x}) = (Q\vec{p}) \cdot \vec{x}, \quad \forall \vec{p} \in \Delta_d, \; \forall \vec{x} \in \mathbb{R}^d,
\end{equation}
%with $\Delta_d$ denoting the probability simplex in $\mathbb{R}^d$.
with 
\begin{equation}
    \Delta_d := \left\{ \vec{p}\in \mathbb{R}^d \;:\; \vec{p}\geq 0,\, \sum_{i=1}^d p_i =1 \right\}
\end{equation}
denoting the probability simplex in $\mathbb{R}^d$.
This implies that $Q = S^T$, the transpose of $S$, or equivalently $Q = (Q_{ij})_{i , j =1}^d$ is a \textit{column-stochastic} matrix, namely, 
\begin{align}
Q_{ij} \geqslant 0, \quad i,j=1, \dots , d,\\
\sum_{i=1}^d Q_{ij} = 1, \quad j=1,\dots , d. \label{norm_col-stoch}
\end{align}
The quantity $Q_{ij}$ represents the unit-time probability of the transition $j \mapsto i$, and  property~\eqref{norm_col-stoch} amounts to say that the the total probability of transitions from state $j$ to all possible states in $\Omega$ sums to one. Put differently, the two properties are equivalent to impose that $Q \Delta_d \subseteq \Delta_d$, viz. $Q$ maps states into states, clarifying the defining properties of $S = Q^T$.   
Again, since the evolution is homogeneous, then the probability vector $\vec{p}_{0}$ at time $t=0$, i.e., the initial state of the chain, evolves over the interval $[0,n]$ as follows
\begin{equation}
\label{dyn_HMC}
\vec{p}_{0} \in \Delta_d \mapsto \vec{p}_{n} = Q^n \vec{p}_{0} \in \Delta_d,
\end{equation}
or, in other words, the dynamics at all (discrete) times is described by the semigroup $(Q^n)_{n \in \mathbb{N}}$ associated to $Q$ via composition. %Similarly, the evolution of the states, namely, the probability vectors, at any discrete time is given by $(Q^{T})^n)_{n \in \mathbb{N}}$. 
It is worth noting that the set of row-stochastic (respectively, column-stochastic) matrices forms a semigroup; that is, $QP$ is row-stochastic (respectively, column-stochastic) whenever $Q,P$ are. Therefore, for any row-stochastic matrix $Q$, $Q^n$ ($(Q^{T})^n$) is row-stochastic (column-stochastic) for any $n \geqslant 1$. Consequently, the set of matrices which are \textit{bistochastic}, that is, column- and row-stochastic, forms also a semigroup. 

In the continuous-time setting, a classical Markovian evolution of probability vectors  is described by a continuous semigroup  $(Q(t))_{t\in\mathbb{R}^+}$ of column-stochastic matrices or, equivalently, by the master equation
\begin{equation}
    \frac{d}{dt} \vec{p}(t) = L \vec{p}(t) \Rightarrow \vec{p}(t) = Q(t) \vec{p}(0) = e^{tL} \vec{p}(0) \in \Delta_d,
\end{equation}
Here, $L = (L_{ij})_{i,j=1}^d$, known as the \textit{Kolmogorov generator}, 
%is defined by the conditions
% \begin{equation}
% \sum_{i=1}^d L_{i j} = 0, \quad L_{ij} \geqslant 0, \; i \neq j,
% \end{equation}
has the explicit structure
\begin{equation}
\label{Kolmogorv_struc}
    L_{ij} = W_{ij} - \delta_{ij} W_j, \quad W_j = \sum_{i=1}^d W_{ij}, \quad i,j=1, \dots , d,
\end{equation}
where $W_{ij} \geqslant 0$ with $i \neq j$ represents the transition rate from state $j$ to state $i$. As in the discrete-time setting, the classical Markovian dynamics of observables is obtained from the one of states by the duality condition~\eqref{dual_cond}.
\subsection{Quantum case}
\label{open_dyn_sec}
In this section we will discuss the dynamics of an \textit{open quantum system} or, borrowing the words of the seminal paper~\cite{sudarshan1961stochastic}, the stochastic dynamics of a quantum system, in a sense which will be clearer soon. 
%Recall that an open quantum system is a quantum system interacting with its surroundings, called the environment or the bath. The system together with the environment form the so-called universe, usually assumed to be a closed system, viz. we neglect its interaction with any other system.

In the standard description of quantum mechanics, a quantum system is described by a Hilbert space $\mathcal{H}$ which, if not stated otherwise, we will assume to be \emph{finite-dimensional} with dimension $d$. Moreover, observables are defined as the self-adjoint elements of the non-commutative $C^\ast$-algebra $(B(\mathcal{H}) , \sbullet , \ast , \| \cdot \|)$, where $B(\mathcal{H})$ denotes the space of bounded operators over $\mathcal{H}$, $\sbullet$ the composition product, $\ast$ the Hermitian conjugate operation, and $\| \cdot \|$ the operator norm in $B(\mathcal{H})$. The symbol $\sbullet$ to denote the composition of operators will be henceforth dropped for the sake of simplicity of notation.
Any positive semi-definite operator in $B(\mathcal{H})$, shortly denoted as $X \geqslant 0$, is of the form $X = Y^\ast Y$ for some $Y \in B(\mathcal{H})$.
Recall also that a state $\varphi$ is defined as a positive normalized linear functional on $B(\mathcal{H})$, namely,
\begin{equation}
    X \geqslant 0 \Rightarrow \varphi(X) \geqslant 0, \quad \varphi(\mathbb{I}) = 1, 
\end{equation}
where $\mathbb{I}$ stands for the identity operator of $B(\mathcal{H})$.
Moreover, any state
may be identified with a \textit{density operator} $\rho$, which is a positive semi-definite operator on $\mathcal{H}$ with unit trace, through the pairing 
\begin{equation}
\label{exp_val}
   \varphi(X) = \Tr(\rho X) = \braket{\rho}{X}_2, \quad X \in B(\mathcal{H}),
\end{equation}
induced by the \textit{Hilbert-Schmidt inner product} $\braket{X}{Y}_2 := \Tr(X^\ast Y)$ with $X,Y \in B(\mathcal{H})$.
In particular,~\eqref{exp_val} expresses the expectation value of the observable $X = X^\ast \in B(\mathcal{H})$ over the state $\rho$.
Let $\mathcal{S}(\mathcal{H})$ be the convex and compact set of density operators on $\mathcal{H}$, which can be always expressed as 
\begin{equation}
\label{ens_dec}
   \!\! \!\!\!\rho = \sum_{i=1}^N p_i \ketbra{\psi_i}{\psi_i}, \quad \vec{p} = (p_i)_{i=1}^N \in \Delta_N, \,\ket{\psi_{i}} \in \mathcal{H}, \| \psi_i \|=1 , 
\end{equation}
i.e., as an ensemble of \textit{pure}, that is idempotent, density operators weighted by some probability distribution. In this sense, an open-system evolution may be regarded as stochastic, as it involves not only pure states, $\ketbra{\psi}{\psi}$, as in the closed-system case, but also their statistical mixtures. 

Now, we would like to describe the homogeneous discrete-time evolution of an open quantum system in the \textit{Heisenberg picture}, where we look how observables evolve in time. First, since, in the quantum realm, observables are Hermitian operators over $\mathcal{H}$, their unit-time evolution can be reasonably described by a linear map $\Phi$ on $B(\mathcal{H})$, see~\cite{gisin1990weinberg} for a discussion of the linearity assumption. Secondly, the homogeneity assumption necessarily implies that the dynamics at all times reduces to the discrete-time semigroup $(\Phi^n)_{n \in \mathbb{N}}$ associated to $\Phi$ via composition, where $\Phi^n = \Phi \circ \dots \circ \Phi$ denotes the $n$-th fold composition of $\Phi$. Thus it remains to understand the properties characterizing the unit-time evolution $\Phi$. By analogy with the classical case, we can assume that 
\begin{equation}
\Phi(X) = \Phi(X)^\ast, \;\, \forall X = X^\ast \in B(\mathcal{H}), \qquad \Phi(\mathbb{I}) = \mathbb{I},
\end{equation}
namely, the \textit{Hermiticity-preservation} and \textit{unitality conditions}, respectively. The first property ensures that observables are mapped into observables, while the second one amounts to impose that the (trivial) identity observable $\mathbb{I}$ is a fixed point of the evolution. However, as also suggested by the classical case, we shall see that the property of Hermiticity-preservation is not sufficient for a legitimate description of the evolution and, to this purpose, it is again worthwhile to switch to the adjoint dynamics involving the density operators, known as the \textit{Schr{\"o}dinger picture} of the system's dynamics. 

The equivalence of the two pictures is established through Dirac’s
prescription~\cite{Dirac1958}
\begin{equation}
\label{dual_cond_quant}
\braket{\Psi(\rho)}{X}_2 = \braket{\rho}{\Phi(X)}_2, \quad \forall X = X^\ast \in B(\mathcal{H}),\; \forall \rho \in \mathcal{S}(\mathcal{H}),
\end{equation}
of equality between the evolved expectation values in the Schr{\"o}dinger and Heisenberg pictures, thus we infer that the unit-time Schr{\"o}dinger evolution $\Psi$ is the Hilbert-Schmidt adjoint $\Phi^\dagger$ of $\Phi$. For sure, in the Schr{\"o}dinger picture, the unit-time dynamics must map states into states, namely, $\Phi^\dagger(\mathcal{S}(\mathcal{H})) \subseteq \mathcal{S}(\mathcal{H})$, meaning that 
\begin{equation}
\label{PTP}
X \geqslant 0 \Rightarrow \Phi^\dagger(X) \geqslant 0, \qquad \Tr(\Phi^\dagger(X)) = \Tr(X), \quad \forall X \in B(\mathcal{H}).
\end{equation}
 The first condition in~\eqref{PTP}, called \textit{positivity}(-preservation), is the quantum analogue of the first property in~\eqref{pos_cond}, while the second one, known as \textit{trace-preservation}, is the quantum counterpart of~\eqref{norm_row-stoch}. So any Schr{\"o}dinger dynamics must be a positive trace-preserving map on $B(\mathcal{H})$. 
 
 However, we will now show that positivity is not enough for a suitable description of the evolution of open quantum systems. Indeed, given any bipartite quantum system $A + B$ described by a Hilbert space $\mathcal{H} = \mathcal{H}_{A} \otimes \mathcal{H}_{B}$, if $\Phi_{A}$ and $\Phi_{B}$ are positive trace-preserving maps on $B(\mathcal{H}_{A})$ and $B(\mathcal{H}_{B})$, respectively, then the map
\begin{equation}
\Phi_A \otimes \Phi_B : B(\mathcal{H}_{A} \otimes \mathcal{H}_{B}) \rightarrow B(\mathcal{H}_{A} \otimes \mathcal{H}_{B})
\end{equation} 
is a trace-preserving, but in general \textit{not} positive. In other words, positivity is not stable under tensor products, a fact exemplified by the transposition map~\cite{arveson1969}. Physically, this means that, even if the evolutions $\Phi_A$ and $\Phi_B$ of the two subsystems are legitimate, this may not be true for the dynamics of the composite system $A+B$. More precisely, while any \textit{separable} state $\rho_{AB}$, that is, of the form 
\begin{equation}
\label{sep_def}
\rho_{AB}= \sum_{i=1}^N p_i \rho_{i}^{(A)} \otimes \rho_{i}^{(B)},
\end{equation}
with $\vec{p} = (p_i)_{i=1}^d \in \Delta_d$ and $\rho_{i}^{(A)}$ ($\rho_{i}^{(B)}$) states on $\mathcal{H}_{A}$ ($\mathcal{H}_{B}$), is mapped into a (separable) state by $\Phi_A \otimes \Phi_B$, this is not generally true for non-separable, \textit{entangled} states. So we need to introduce a property guaranteeing that the evolution also maps entangled density operators of $A + B$ into density operators. To this purpose, let us introduce the $n$-\textit{fold ampliation} of a positive map $\Phi$
\begin{equation}
\label{ampl_map}
\Phi^{(n)} :=  \mathrm{id}_n \otimes \Phi : \mathcal{M}_{n}(\mathbb{C}) \otimes B(\mathcal{H})\rightarrow   \mathcal{M}_{n}(\mathbb{C}) \otimes B(\mathcal{H}),
\end{equation} 
where $\mathrm{id}_n$ is the identity operator on $B(\mathbb{C}^n) \simeq \mathcal{M}_{n}(\mathbb{C})$, the space of complex matrices of order $n$. If the $n$-th ampliation map $\Phi^{(n)}$ is positive for any $n \geqslant 1$, then $\Phi$ is called \textit{completely positive}.    

That being said, if $\Phi_A$ and $\Phi_B$ are completely positive trace-preserving maps on $B(\mathcal{H}_A)$ and $B(\mathcal{H}_{B})$ with $d_i := \dim(\mathcal{H}_i)$ and $i=A , B$, then
\begin{equation}
\Phi_A \otimes \Phi_B = (\Phi_A \otimes \mathrm{id}_{d_B})(\mathrm{id}_{d_A} \otimes \Phi_B)
\end{equation}
is a positive trace-preserving map, namely, it maps states into states, because composition of such maps. Therefore, we conclude that the Schr{\"o}dinger dynamics of an open quantum system with Hilbert space $\mathcal{H}$ is described by a completely positive trace-preserving  map $\Phi^\dagger$ on $B(\mathcal{H})$, called a \textit{quantum channel} in the physics literature. Coming back to the Heisenberg picture, described in the unit-time by $\Phi$, one can readily check that the unitality condition of $\Phi$ is equivalent to trace preservation of the Schr{\"o}dinger evolution $\Phi^\dagger$ and, moreover, complete positivity of $\Phi^\dagger$ is equivalent to the one of $\Phi$~\cite{heinosaari2011mathematical}. 

Thus, the Heisenberg unit-time dynamics is given by a Unital Completely Positive (\textit{UCP}) map $\Phi$ on $B(\mathcal{H})$. The semigroup $(\Phi^n)_{n \in \mathbb{N}}$ associated with the UCP map $\Phi$ gives the evolution at all discrete times, and it may be called \textit{quantum Markov chain}, following the jargon of classical probability theory. 
Analogously, the Schr{\"o}dinger dynamics at all discrete times is described by the semigroup $((\Phi^\dagger)^n)_{n\in\mathbb{N}}$ induced by the quantum channel $\Phi^\dagger$. 
Note that, consistently, $\Phi^n$ [respectively, $(\Phi^\dagger)^n$] is a UCP map (respectively, quantum channel), i.e., a valid Heisenberg (respectively, Schr{\"o}dinger) dynamics, for any $n \geqslant 1$ whenever $\Phi$ is. This is a consequence of the fact that the set of UCP maps (respectively, quantum channels) forms a semigroup, namely, if $\Phi$ and $\Psi$ are UCP maps (respectively, quantum channels), then so is $\Phi \circ \Psi$. Therefore, the set of unital and trace-preserving completely positive maps, or unital channels, forms a semigroup.

Similarly to the classical case, the continuous-time quantum Markovian evolution of an open quantum system is described by a continuous semigroup of quantum channels $(\Phi_t)_{t\in\mathbb{R}^+}$ with $\Phi_0 = \mathrm{id}_d$, the identity on $B(\mathcal{H})$, or, in an alternative way, by the Markovian quantum master equation
\begin{equation}
    \frac{d}{dt} \rho(t) = \mathcal{L} (\rho(t)) \Rightarrow \rho(t) = \Phi_t (\rho(0)) = e^{t\mathcal{L}} (\rho(0)) , \quad t \in \mathbb{R}^+.
\end{equation}
Recall that the generator $\mathcal{L}$ can be written in the Gorini-Kossakowski-Sudarshan-Lindblad (\textit{GKLS}) form~\cite{GKS76,Lindblad76}
\begin{equation}
\label{diag_GKLS}
\mathcal{L}(X)=-\mathrm i[H,X] + \sum_{k=1}^{N} \left(L_k X L_k^\ast - \frac{1}{2} \{ L_k^\ast L_k , X \} \right), \quad X \in B(\mathcal{H}),
\end{equation}
where $H=H^*$ is a Hermitian operator which may be interpreted as an effective Hamiltonian of the system, while the operators $\{L_k \}_{k=1}^N$ are known as \textit{noise} or \textit{jump operators}.
Note that the representation~\eqref{diag_GKLS} of the GKLS generators is the quantum analogue of the structure~\eqref{Kolmogorv_struc} of the Kolmogorov generators. Finally, notice that the continuous-time Markovian dynamics in the Heisenberg picture follows
from the one in the Schr{\"o}dinger scheme via Dirac's pairing~\eqref{dual_cond_quant}.
%So, if a quantum channel $\Phi$ describes the unit-time dynamics of the system, then, under the homogeneity assumption, the evolution in the time interval $[0,n]$, $n\in \mathbb{N}$, is given by the $n$-fold composition of $\Phi$, $\Phi^n = \Phi \circ \Phi \circ \dots \circ \Phi$, which is a quantum channel since $\Phi$ is. Explicitly, if $\rho_0 = \rho(0)$ is the state of the system at the initial time $t=0$, then the state at time $t=n$ is given by $\rho(n)=\Phi^n(\rho_0)$. Following the jargon of classical probability theory, the quantum channel $\Phi$ induces a \textit{homogeneous} discrete-time \textit{quantum Markov chain}, viz. the chain of states $\rho_0 \mapsto \rho(1) \mapsto \dots \mapsto \rho(n) \mapsto \dots$.
\section{Between positivity and complete positivity}
\label{pos_compl_pos_sec}
In the previous section, we introduced the notions of positivity and complete positivity in order to describe the stochastic evolution of classical and quantum systems. In the present section, we will discuss more in depth these properties and present some of the notions lying between positivity and complete positivity, introduced in order to analyze the intricate structure of the set of positive maps from different perspectives. Historically, interest in positive but not completely positive maps has already emerged in the fifties within the mathematical community since the seminal papers~\cite{kadison1952generalized,osaka1991positive,stinespring1955positive}, and it remains vivid over the next years~\cite{choi1974schwarz,choi1980some,robertson1983schwarz,woronowicz1976positive}, with a particular focus on idempotent positive maps~\cite{effros1979positive,hamana1979injective,osaka1991positive,stormer1982positive}. In fact, the topic has continued to receive a lot of attention in recent years~\cite{bhatia2000more,carlen2023characterizing,chruscinski2024class,mlynik2025characterization}, also in view of its connection with quantum dynamics, see Subsection~\ref{open_dyn_sec}, entanglement theory~\cite{horodecki1996necessary,peres1996separability}, as explained in the present section, and quantum information geometry~\cite{petz1996monotone,petz1996geometries}. For instance, several authors recently started to address the spectral consequences of (complete) positivity aimed at finding some non-trivial constraints for the relaxation of quantum systems towards equilibrium~\cite{chruscinski2021universal,chruscinski2025universal,muratore2025universal,vom2025universal}, see also~\cite{uhl2019affinity,xu2025thermodynamic} for similar efforts in the classical realm. 

Specifically, in the present section, we shall discuss the natural intermediate property of $n$-positivity, as well as the validity of Cauchy-Schwarz-like operator inequalities~\cite{kadison1952generalized,stinespring1955positive,woronowicz1976positive}, and we shall also mention the notion of $n$-copositivity~\cite{woronowicz1976positive}, involving the transposition map. 
%From a more physical point of view, the theory of positive maps also turned out to have deep connections with entanglement theory, see later in this section for more details.

We begin by recalling the notion of $n$-positivity, which we discuss --- together with positivity and complete positivity --- in the framework of $C^\ast$-algebras for later convenience. Recall that a positive element $a$ of a $C^\ast$-algebra $\mathcal{A}$, denoted with $a \geqslant 0$, is of the form $a = b^\ast b$ for some $b  \in \mathcal{A}$, as already noted when $\mathcal{A} = \mathbb{C}^n$ and $\mathcal{A} = B(\mathcal{H})$ in Subsections~\ref{classic_sec} and~\ref{open_dyn_sec}, respectively.  
\begin{defn}[$n$-positivity and complete positivity]
\label{CP_def}
Let $\Phi : \mathcal{A} \mapsto \mathcal{B}$ be a map between  $C^\ast$-algebras $\mathcal{A}$ and $\mathcal{B}$. Then  
\begin{itemize}
\item[]$i.$ $\Phi$ is called positive if and only if $\Phi(a) \geqslant 0$ for all $a \geqslant 0$;
\item[]$ii.$ A positive map $\Phi$ is said to be $n$-positive, for $n\geqslant 1$, if the amplified map $\Phi^{(n)}:= \mathrm{id}_n \otimes \Phi$ on $\mathcal{M}_{n}(\mathbb{C}) \otimes \mathcal{A}$, with $\mathcal{M}_{n}(\mathbb{C})$ being the algebra of complex matrices of order $n$ and $\mathrm{id}_n$ the identity, is positive;
\item[]$iii.$ $\Phi$ is said to be completely positive if $\Phi$ is $n$-positive for all $n \geqslant 1$.
\end{itemize}
\end{defn}
\begin{rem}
Observe that completely positive maps introduced in Definition~\ref{CP_def} should not be confused with completely positive matrices~\cite{berman2003completely}, namely, real positive semi-definite matrices $A = B^TB$ with $B$ being a positive matrix. 
\end{rem}
Clearly, a one-positive map is a positive map. Moreover, it is clear  that the set $\mathcal{P}_n$ of $n$-positive maps forms a closed convex cone, namely,
\begin{align}
\Phi \in \mathcal{P}_n &\Rightarrow c\Phi \in \mathcal{P}_{n},  \quad c \in \mathbb{R}^+,\label{cone_prop} \\
\Phi , \Psi \in \mathcal{P}_n &\Rightarrow\lambda \Phi + (1-\lambda) \Psi \in \mathcal{P}_{n} \;\;\forall \lambda \in (0,1), \label{convex_prop} \\
(\Phi_{n})_{n \in \mathbb{N}} \subset \mathcal{P}_{n} &\Rightarrow \lim_{n \rightarrow \infty} \Phi_{n} \in \mathcal{P}_{n}. \label{clos_prop}
\end{align}
%and analogously for $\mathcal{CP}$.
In particular, we can spell out the properties of $n$-positivity and complete positivity in the classical case $\mathcal{A} = \mathbb{C}^d$: a positive matrix $A \in \mathcal{M}_{d}(\mathbb{R})$ is called $n$-positive if its $n$-fold ampliation 
\begin{equation}
 \label{ampl_map_cl}
A^{(n)}=  \mathrm{id}_n \otimes A : \mathcal{M}_{n}(\mathbb{C}) \otimes \mathbb{C}^d \rightarrow   \mathcal{M}_{n}(\mathbb{C}) \otimes \mathbb{C}^d,
\end{equation}
is positive. It is completely positive if this holds for any $n \geqslant 1$. Importantly, \textit{positivity is equivalent to complete positivity in the classical case}, indeed the amplified map~\eqref{ampl_map_cl} explicitly reads
\begin{equation}
     A^{(n)} =\mathrm{id}_n  \otimes A = (A_{ij})_{i,j=1}^n,\quad  A_{ij} :=  \begin{cases}
     A \;\;\;i = j\\
     0 \;\;\;\mbox{  otherwise}
     \end{cases},
% \begin{pmatrix}
    % A & A & \dots & A \\
    % A & A & \dots & A \\
    % \vdots & \vdots & \ddots
    % \end{pmatrix},
\end{equation}
therefore $A^{(n)}$ is positive if and only if so is $A$. This result is a consequence of the commutativity of the $C^\ast$-algebra $(\mathbb{C}^d , \sbullet, \ast , \| \cdot \|)$, as discussed below. We refer
to~\cite{arveson1969,stinespring1955positive} for the proofs.
\begin{prop}
\label{comm_dom_range_pos_prop}
    Let $\mathcal{A},\mathcal{B}$ be $C^\ast$-algebras, and $\Phi : \mathcal{A} \mapsto \mathcal{B}$ be a positive map. Then, if either $\mathcal{A}$ or $\mathcal{B}$ is commutative, $\Phi$ is completely positive.
\end{prop}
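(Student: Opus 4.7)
The plan is to handle the two cases separately, invoking the Gelfand representation to identify any commutative unital $C^\ast$-algebra with $C(X)$ for some compact Hausdorff space $X$; in the finite-dimensional setting of the excerpt, $X$ is discrete and $C(X) = \mathbb{C}^d$.

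First case: \emph{the domain is commutative}, $\mathcal{A} \cong C(X)$. I would exploit the canonical identification $\mathcal{M}_n(\mathbb{C}) \otimes C(X) \cong C(X, \mathcal{M}_n(\mathbb{C}))$, under which an element $F$ is positive if and only if the matrix $F(x) \in \mathcal{M}_n(\mathbb{C})$ is positive semi-definite for every $x \in X$. Given such a positive $F$ and $\epsilon > 0$, uniform continuity of $F$ combined with a finite partition of unity $\{g_i\}$ subordinate to a suitable open cover yields the approximation $F \approx \sum_i g_i \otimes F(x_i)$, where each $x_i$ lies in the support of $g_i$ so that $F(x_i) \geqslant 0$ in $\mathcal{M}_n(\mathbb{C})$. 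Applying $\Phi^{(n)} = \mathrm{id}_n \otimes \Phi$ gives $\sum_i \Phi(g_i) \otimes F(x_i)$; since $\Phi(g_i) \geqslant 0$ in $\mathcal{B}$ by positivity of $\Phi$ and $F(x_i) \geqslant 0$ in $\mathcal{M}_n(\mathbb{C})$, each summand is a positive elementary tensor, so the sum is positive. Closedness of the positive cone (property~\eqref{clos_prop}) then promotes this to $\Phi^{(n)}(F) \geqslant 0$, establishing complete positivity of $\Phi$.

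Second case: \emph{the codomain is commutative}, $\mathcal{B} \cong C(Y)$. For each $y \in Y$ the evaluation map $\mathrm{ev}_y : C(Y) \to \mathbb{C}$ is a positive linear functional, so $\varphi_y := \mathrm{ev}_y \circ \Phi : \mathcal{A} \to \mathbb{C}$ is a positive linear functional on $\mathcal{A}$. A key lemma I would invoke is that any positive linear functional $\varphi$ on a $C^\ast$-algebra is automatically completely positive: writing a positive $F \in \mathcal{M}_n(\mathcal{A})$ as $b^\ast b$ with $b = (b_{ij})$, one obtains $\varphi(F_{ij}) = \sum_k \varphi(b_{ki}^\ast b_{kj})$, and for each $k$ the matrix $(\varphi(b_{ki}^\ast b_{kj}))_{ij}$ is a Gram matrix with respect to the positive semi-definite sesquilinear form $(a,b)\mapsto \varphi(a^\ast b)$, hence positive. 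Under the identification $\mathcal{M}_n(\mathbb{C}) \otimes C(Y) \cong C(Y, \mathcal{M}_n(\mathbb{C}))$, the value of $\Phi^{(n)}(F)$ at $y$ is exactly $\varphi_y^{(n)}(F) \geqslant 0$, and this for every $y$ is precisely the positivity of $\Phi^{(n)}(F)$.

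The main technical obstacle I foresee is the approximation step in the first case when $X$ is infinite, since it requires the partition-of-unity argument combined with the continuity of $\Phi$ to pass from elementary-tensor positivity to a closure statement. In finite dimensions this collapses to writing $F \in \mathcal{M}_n(\mathbb{C}) \otimes \mathbb{C}^d$ as $\sum_{i=1}^d e_i \otimes F_i$ with $F_i \geqslant 0$, for which no approximation is needed. The second case is purely algebraic, and its only delicate point is the complete positivity of scalar-valued states, which rests entirely on the $b^\ast b$ factorisation of positive matrices over $\mathcal{A}$.
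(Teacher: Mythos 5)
Your proposal is correct, but it necessarily diverges from the paper, because the paper does not actually prove Proposition~\ref{comm_dom_range_pos_prop}: it defers the general statement to the cited references of Arveson and Stinespring, and only works out the special case of a map $A:\mathbb{C}^d\to\mathbb{C}^d$, where the ampliation $\mathrm{id}_n\otimes A$ is block-diagonal with $n$ copies of $A$, so that positivity of $A^{(n)}$ is manifestly equivalent to positivity of $A$ --- exactly the finite-dimensional collapse you note at the end of your argument. What you supply is a full reconstruction of the two standard arguments from those references: for a commutative domain, the partition-of-unity approximation of a positive $F\in C(X,\mathcal{M}_n(\mathbb{C}))$ by sums of positive elementary tensors (Stinespring's route), and for a commutative codomain, the reduction to evaluation functionals together with the Gram-matrix proof that every positive linear functional is completely positive (Arveson's route). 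Both cases are sound. Two small points you should tighten: the limit step in the first case needs the automatic boundedness of positive maps between $C^\ast$-algebras, so that $\Phi^{(n)}$ is continuous and the norm-closedness of the positive cone of $\mathcal{M}_n(\mathcal{B})$ can be applied; and the closedness you invoke is that of this positive cone, not property~\eqref{clos_prop} of the paper, which concerns closedness of the set $\mathcal{P}_n$ of $n$-positive maps. Neither point affects the validity of the argument.
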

In particular, the latter proposition allows us to conclude that any state $\varphi : \mathcal{A} \mapsto \mathbb{C}$ on a $C^\ast$-algebra $\mathcal{A}$ is automatically completely positive.   

Once the classical case has been understood , we can move to the quantum setting $\mathcal{A} = B(\mathcal{H})$. Physically, in the quantum realm, we can interpret any $n$-positive trace-preserving map $\Phi$ on $B(\mathcal{H})$ in the following way: the map $\Phi^{(n)} = \mathrm{id}_n \otimes \Phi$ describes in a legitimate way the evolution of a composite system $A + S$, in which the system $S$ evolves under $\Phi$, while the $n$-dimensional ancilla evolves trivially, but this is not generally true if we replace the ancilla $A$ with a $n^\prime$-dimensional ancilla $A^\prime$ where $n^\prime > n$. In order to clarify better the physical meaning of $n$-positivity,
let us recall two relevant results about $n$-positive and completely positive maps, first discussed in~\cite{Choi_1972}.
\begin{prop}
\label{n_pos_prop}
Let $d = \dim(\mathcal{H})$ and $\Phi$ be a positive map on $B(\mathcal{H})$. Then,
\begin{itemize}
\item[]$i.$ if $\Phi$ is $n$-positive, then it is $m$-positive for all $m < n$;
\item[]$ii.$ if $\Phi$ is $d$-positive, then it is completely positive.
\end{itemize}
\end{prop}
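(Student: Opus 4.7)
Part (i) is purely structural: I would deduce $m$-positivity from $n$-positivity by a corner-embedding argument. Given any positive $A \in \mathcal{M}_m(\mathbb{C}) \otimes B(\mathcal{H})$, I form the padded block-diagonal element $\tilde{A} := A \oplus 0_{n-m} \in \mathcal{M}_n(\mathbb{C}) \otimes B(\mathcal{H})$, which is again positive. Applying the $n$-positive map $\Phi^{(n)} = \mathrm{id}_n \otimes \Phi$ preserves positivity and acts block-wise on the matrix indices, so compressing by the inclusion isometry $V : \mathbb{C}^m \hookrightarrow \mathbb{C}^n$ returns exactly the upper-left $m\times m$ corner,
\begin{equation}
\Phi^{(m)}(A) = (V^\ast \otimes \mathbb{I})\,\Phi^{(n)}(\tilde{A})\,(V \otimes \mathbb{I}) \geqslant 0.
\end{equation}

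For part (ii), my plan is to exploit the fact that $d$-positivity is already enough to probe the full content of $\Phi$ through its Choi matrix. With an orthonormal basis $\{\ket{i}\}_{i=1}^d$ of $\mathcal{H}$ and matrix units $E_{ij} := \ketbra{i}{j}$, the maximally entangled rank-one element
\begin{equation}
P := \sum_{i,j=1}^d E_{ij} \otimes E_{ij} = \ketbra{\Omega}{\Omega}, \qquad \ket{\Omega} := \sum_{i=1}^d \ket{i} \otimes \ket{i},
\end{equation}
is positive in $\mathcal{M}_d(\mathbb{C}) \otimes B(\mathcal{H})$, so $d$-positivity of $\Phi$ immediately yields the positivity of the Choi matrix
\begin{equation}
C_\Phi := \Phi^{(d)}(P) = \sum_{i,j=1}^d E_{ij} \otimes \Phi(E_{ij}) \geqslant 0.
\end{equation}

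The core step --- and the only substantive obstacle --- is converting $C_\Phi \geqslant 0$ into an operator-sum decomposition of $\Phi$. I would diagonalize $C_\Phi = \sum_{k} \ketbra{v_k}{v_k}$ and reshape each eigenvector $\ket{v_k} \in \mathbb{C}^d \otimes \mathcal{H}$ as a linear operator $V_k : \mathbb{C}^d \to \mathcal{H}$ via $\ket{v_k} = \sum_i \ket{i} \otimes V_k \ket{i}$. Reading off the $(i,j)$-block of the identity $C_\Phi = \sum_k \ketbra{v_k}{v_k}$ gives $\Phi(E_{ij}) = \sum_k V_k E_{ij} V_k^\ast$, which extends by linearity to the Kraus form
\begin{equation}
\Phi(X) = \sum_{k} V_k X V_k^\ast, \qquad X \in B(\mathcal{H}).
\end{equation}
Complete positivity is then automatic, since for every $n \geqslant 1$ and every positive $Y \in \mathcal{M}_n(\mathbb{C}) \otimes B(\mathcal{H})$ one has $\Phi^{(n)}(Y) = \sum_k (\mathbb{I}_n \otimes V_k)\, Y\, (\mathbb{I}_n \otimes V_k)^\ast \geqslant 0$.

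The hard part is the reshaping bijection between $\mathbb{C}^d \otimes \mathcal{H}$ and linear maps $\mathbb{C}^d \to \mathcal{H}$: one must verify carefully that matching coefficients on the matrix-unit basis $\{E_{ij}\}$ really does recover $\Phi$ on the whole of $B(\mathcal{H})$. Once this Choi--Kraus correspondence is in place, part (ii) follows immediately and part (i) reduces, as above, to an elementary compression --- so no ampliation strictly larger than $d$ ever needs to be tested.
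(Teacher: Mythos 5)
Your proof is correct: part (i) is the standard compression/corner-embedding argument, and part (ii) is precisely Choi's original Choi-matrix-to-Kraus-decomposition argument, which is the proof the paper implicitly relies on by citing Choi (1972) without reproducing it. No gaps --- the reshaping step you flag as the ``hard part'' is indeed just coefficient matching on the matrix-unit basis $\{E_{ij}\}$, which spans $B(\mathcal{H})$, so linearity finishes the job exactly as you describe.
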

Physically speaking, property~\textit{i.} implies that, given any $n$-positive trace-preserving map $\Phi$, then the evolution $\mathrm{id}_{n^\prime} \otimes \Phi$ of the compound system $A+S$ is legitimate provided that $n^\prime \leqslant n$, while the implication~\textit{ii.} tells us that, given a $d$-dimensional system, if the map $\mathrm{id}_{d} \otimes \Phi $, with $\Phi$ being positive trace-preserving, maps states into states, then so does any map $\mathrm{id}_{n} \otimes \Phi$ with $n > d$. 

Now, as already noted in Subsection~\ref{open_dyn_sec}, the physical phenomenon behind $n$-positivity and, in particular, complete positivity is the occurrence of entangled states in bipartite quantum systems.       %Mathematically, Proposition~\ref{n_pos_prop} implies the following chain of inclusions 
Thus, in order to go deeper into this connection, let us recall several basic concepts of entanglement theory. Given two systems $A$ and $B$ with Hilbert spaces $\mathcal{H}_A$ and $\mathcal{H}_B$, any bipartite pure state $\ket{\psi_{AB}}$ of the composite system $A+B$ always admits a \textit{Schmidt decomposition}~\cite{schmidt1907theorie}
\begin{equation}
\ket{\psi_{AB}} = \sum_{i=1}^{r}s_i \ket{i_A} \otimes \ket{i_B},
\end{equation} 
where $\{ \ket{i_A} \}_{i=1}^{d_A}$ and $\{ \ket{i_B} \}_{i=1}^{d_B}$ are orthonormal bases of  $\mathcal{H}_{A}$ and $\mathcal{H}_{B}$, $s_i \geqslant 0$, and $r = \mbox{SR}(\ket{\psi_{AB}}) \leqslant  \min \{d_A , d_B \}$ is called the \textit{Schmidt rank} of $\ket{\psi_{AB}}$. If $\mbox{SR}(\ket{\psi_{AB}}) = 1$, then $\ket{\psi_{AB}} = \ket{\psi_{A}} \otimes \ket{\psi_{B}}$, i.e., the state is factorized, otherwise it is entangled. In particular, $\ket{\psi_{AB}}$ is said to be maximally entangled whenever $r = \min\{ d_A , d_B \}$ and $s_i = 1/\sqrt{r}$ for any $i=1, \dots , r$~\cite{heinosaari2011mathematical}. Therefore, we can regard the Schmidt rank as a quantifier of the entanglement content of a pure bipartite state.
% So if $d_A = d_B = d$, then it reads
% \begin{equation}
% \ket{\psi_{AB}} = \frac{1}{\sqrt{d}} \sum_{i=1}^{d}\ket{i_A}\otimes \ket{i_B}.
% \end{equation}
The notion of Schmidt rank may be extended to bipartite density operators as follows~\cite{terhal_2000}
\begin{equation}
\label{Schmidt_num}
\mbox{SN}(\rho) := \min_{ \{ p_i , \ket{\psi_i} \} }\max_{i}\mbox{SR}(\ket{ \psi_i }),\quad \rho \in \mathcal{S}(\mathcal{H}_{A} \otimes \mathcal{H}_{B}),
\end{equation}  
where the minimum is taken over all the ensemble decompositions~\eqref{ens_dec} of $\rho$. The parameter $\mbox{SR}(\rho)$ is called the \textit{Schmidt number} of $\rho$ and, consistently, if $\rho = \ketbra{\psi}$, then $\mbox{SR}(\rho)=\mbox{SR}(\ket{\psi})$. 
The next theorem discusses a characterization of $n$-positive maps in terms of the Schmidt number~\cite{terhal_2000}.
\begin{thm}
Let $\dim(\mathcal{H}) = d$. A map $\Phi: B(\mathcal{H}) \mapsto B(\mathcal{H})$ is $n$-positive if and only if $\Phi^{(d)}(\rho)=  (\mathrm{id}_d \otimes \Phi )(\rho) \geqslant 0$
for all states $\rho\in \mathcal{S}(\mathcal{H}\otimes \mathcal{H})$ with Schmidt number $\mbox{SR}(\rho) \leqslant  n$, where $\mathrm{id}_d$ denotes the identity on $B(\mathcal{H})$.
\end{thm}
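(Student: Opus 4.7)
The plan is to prove both directions by (i) reducing positivity of the ampliated maps on mixed inputs to positivity on rank-one projectors, and (ii) moving these rank-one conditions between the two ampliation spaces $M_n(\mathbb{C})\otimes B(\mathcal{H})$ and $B(\mathcal{H}\otimes \mathcal{H})$ via isometric embeddings of the ancillary factor. As a preliminary reduction, if $n\geqslant d$ then every $\rho\in \mathcal{S}(\mathcal{H}\otimes\mathcal{H})$ automatically has Schmidt number at most $d\leqslant n$, so the hypothesis becomes positivity of $\mathrm{id}_d\otimes \Phi$ on every state, i.e. $d$-positivity of $\Phi$, which by Proposition~\ref{n_pos_prop}(ii) coincides with complete positivity and hence with $n$-positivity. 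It therefore suffices to treat the nontrivial regime $n\leqslant d$.

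For the forward direction, suppose $\Phi$ is $n$-positive and $\rho\in \mathcal{S}(\mathcal{H}\otimes\mathcal{H})$ satisfies $\mbox{SR}(\rho)\leqslant n$. By the definition~\eqref{Schmidt_num}, $\rho$ admits a decomposition $\rho=\sum_i p_i\ketbra{\psi_i}{\psi_i}$ with $\mbox{SR}(\ket{\psi_i})\leqslant n$ for every $i$, so by linearity it is enough to check $(\mathrm{id}_d\otimes\Phi)(\ketbra{\psi}{\psi})\geqslant 0$ for a single pure vector $\ket{\psi}=\sum_{j=1}^{r}s_j\ket{j_A}\otimes\ket{j_B}$ with $r\leqslant n$. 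Choose an isometry $V:\mathbb{C}^n\to\mathcal{H}$ whose range contains $\mathrm{span}\{\ket{j_A}\}_{j=1}^{r}$, and let $\ket{\tilde\psi}\in \mathbb{C}^n\otimes\mathcal{H}$ be the unique vector satisfying $(V\otimes\mathbb{I})\ket{\tilde\psi}=\ket{\psi}$. A direct computation on elementary tensors $A\otimes B\mapsto VAV^\ast\otimes\Phi(B)$ yields the intertwining identity
\begin{equation*}
(\mathrm{id}_d\otimes\Phi)(\ketbra{\psi}{\psi})=(V\otimes\mathbb{I})\,(\mathrm{id}_n\otimes\Phi)(\ketbra{\tilde\psi}{\tilde\psi})\,(V^\ast\otimes\mathbb{I}),
\end{equation*}
whose middle factor is positive by $n$-positivity of $\Phi$; positivity is preserved by conjugation by $V\otimes\mathbb{I}$, so the left-hand side is positive as required.

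The converse direction runs the same template in reverse. To show that $\Phi$ is $n$-positive, it suffices to verify $(\mathrm{id}_n\otimes\Phi)(\ketbra{\chi}{\chi})\geqslant 0$ for every $\ket{\chi}\in \mathbb{C}^n\otimes\mathcal{H}$, because the spectral theorem writes any positive element of $M_n(\mathbb{C})\otimes B(\mathcal{H})$ as a positive sum of such rank-one projectors. Any such $\ket{\chi}$ has Schmidt rank at most $\min\{n,d\}=n$, so choosing any isometry $V:\mathbb{C}^n\to\mathcal{H}$ and setting $\ket{\tilde\chi}:=(V\otimes\mathbb{I})\ket{\chi}\in\mathcal{H}\otimes\mathcal{H}$ preserves the Schmidt rank; after normalization, $\ketbra{\tilde\chi}{\tilde\chi}$ is a state on $\mathcal{H}\otimes\mathcal{H}$ with Schmidt number at most $n$, so by hypothesis $(\mathrm{id}_d\otimes\Phi)(\ketbra{\tilde\chi}{\tilde\chi})\geqslant 0$. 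Reading the same intertwining identity in the form
\begin{equation*}
(\mathrm{id}_d\otimes\Phi)(\ketbra{\tilde\chi}{\tilde\chi})=(V\otimes\mathbb{I})\,(\mathrm{id}_n\otimes\Phi)(\ketbra{\chi}{\chi})\,(V^\ast\otimes\mathbb{I})
\end{equation*}
and conjugating both sides externally by $V^\ast\otimes\mathbb{I}$, the isometry relation $V^\ast V=\mathbb{I}_n$ transfers positivity back to $(\mathrm{id}_n\otimes\Phi)(\ketbra{\chi}{\chi})$, completing the argument.

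The heart of the proof is thus the intertwining identity, whose verification on elementary tensors is immediate. The only real bookkeeping subtleties are the invariance of the Schmidt rank of a bipartite vector under application of an isometry to one tensor factor, and the fact that a generic positive element of $M_n(\mathbb{C})\otimes B(\mathcal{H})$ is a positive sum of rank-one projectors onto Schmidt-rank-$\leqslant n$ vectors; both facts follow directly from the spectral theorem together with the definition~\eqref{Schmidt_num}. No deeper obstacle arises, and the argument is essentially a clean packaging of Choi's dimensional reduction~\ref{n_pos_prop}(ii) with the ensemble characterization of the Schmidt number.
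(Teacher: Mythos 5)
Your proof is correct. Note that the paper itself gives no proof of this statement; it is quoted from Terhal and Horodecki \cite{terhal_2000}, so there is nothing internal to compare against. Your argument is essentially the standard one from that reference: reduce to rank-one inputs via the spectral theorem and the ensemble definition~\eqref{Schmidt_num}, then transport them between $\mathcal{M}_n(\mathbb{C})\otimes B(\mathcal{H})$ and $B(\mathcal{H}\otimes\mathcal{H})$ with the intertwining identity $(\mathrm{id}\otimes\Phi)\bigl((V\otimes\mathbb{I})X(V^\ast\otimes\mathbb{I})\bigr)=(V\otimes\mathbb{I})(\mathrm{id}\otimes\Phi)(X)(V^\ast\otimes\mathbb{I})$, which is immediate on elementary tensors. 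All the steps check out, including the points that are easy to gloss over: the explicit dispatch of the degenerate regime $n>d$ via Proposition~\ref{n_pos_prop}(ii), the existence and uniqueness of the lift $\ket{\tilde\psi}$ (injectivity of $V\otimes\mathbb{I}$ and the fact that $\ket{\psi}$ lies in $\mathrm{range}(V)\otimes\mathcal{H}$), the preservation of positivity under conjugation in both directions, and the use of $V^\ast V=\mathbb{I}_n$ to recover $(\mathrm{id}_n\otimes\Phi)(\ketbra{\chi}{\chi})$ exactly rather than merely a compression of it. The only cosmetic remark is that the converse also silently delivers the positivity of $\Phi$ itself (required by Definition~\ref{CP_def} before ``$n$-positive'' is even meaningful), since Schmidt-rank-one product states are included in the hypothesis; this is subsumed in your argument but could be stated explicitly.
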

In particular, when $n=1$ and $n=d$, we find a result consistent with the definition of positivity and Proposition~\ref{n_pos_prop}. Therefore, in words, given a bipartite qudit system $A+B$, $n$-positive trace-preserving maps, with $n \leqslant d$, map all entangled states of A+B of Schmidt number at most equal to $n$ into states. Roughly speaking, $n$-positive trace-preserving maps, with $n=1,\dots , d-1$, describe in a physically consistent way the evolutions of states which are not too much entangled, where the amount of entanglement is measured by the Schmidt number of the state defined by~\eqref{Schmidt_num}.
The other way around, we can employ $n$-positive maps in order to detect bipartite entangled states with Schmidt number at least $n+1$~\cite{terhal_2000}:
\begin{thm}
\label{char_n_ent_states_thm}
    Let $\rho \in \mathcal{S}(\mathcal{H} \otimes \mathcal{H})$ be a state of a bipartite quantum system. Then $\rho$ is entangled with $\mbox{SR}(\rho) > n$ if and only if there exists a $n$-positive map $\Phi$ such that $(\mathrm{id}_d \otimes \Phi)(\rho) \not \geqslant 0$.   
\end{thm}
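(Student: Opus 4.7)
The ``only if'' direction follows immediately from the preceding theorem: if an $n$-positive $\Phi$ with $(\mathrm{id}_d \otimes \Phi)(\rho) \not\geqslant 0$ exists, then $\mbox{SR}(\rho) \leqslant n$ would force $(\mathrm{id}_d \otimes \Phi)(\rho) \geqslant 0$, a contradiction; and since separable states have Schmidt number equal to one, $\mbox{SR}(\rho) > n \geqslant 1$ also implies that $\rho$ is entangled.

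For the converse, my plan is to combine a Hahn--Banach separation argument with the Choi--Jamio\l{}kowski correspondence in order to manufacture an $n$-positive map detecting $\rho$. First I would introduce
\[
\mathcal{S}_n := \bigl\{\sigma \in \mathcal{S}(\mathcal{H} \otimes \mathcal{H}) : \mbox{SR}(\sigma) \leqslant n\bigr\},
\]
which is convex (the union of two admissible ensembles yields an admissible ensemble for any convex combination) and compact (being the convex hull of the compact set of pure states whose reduced density matrix has rank at most $n$, inside the finite-dimensional state simplex). Since $\rho \notin \mathcal{S}_n$, the Hahn--Banach theorem---after a trivial shift of $W$ by a multiple of the identity---produces a Hermitian operator $W \in B(\mathcal{H} \otimes \mathcal{H})$ with $\Tr(W \sigma) \geqslant 0$ for every $\sigma \in \mathcal{S}_n$ and $\Tr(W \rho) < 0$.

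Next, I would invoke the Choi--Jamio\l{}kowski correspondence to write $W = (\mathrm{id}_d \otimes \Psi)(\ketbra{\Omega})$ for the unnormalized maximally entangled vector $\ket{\Omega} = \sum_{i=1}^d \ket{i} \otimes \ket{i}$ and a uniquely determined Hermiticity-preserving map $\Psi : B(\mathcal{H}) \to B(\mathcal{H})$. Applying the witness inequality to the rank-one projectors $\ketbra{\psi}/\| \psi \|^2$ with $\mbox{SR}(\ket{\psi}) \leqslant n$ gives $\bra{\psi} W \ket{\psi} \geqslant 0$ for all such $\ket{\psi}$, which by the standard Choi-type criterion is precisely the statement that $\Psi$ is $n$-positive. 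Setting $\Phi := \Psi^\dagger$, $n$-positivity survives the Hilbert--Schmidt adjoint since $\mathrm{id}_n \otimes \Psi^\dagger$ is the adjoint of $\mathrm{id}_n \otimes \Psi$. Finally, the identity $\Tr(\Lambda(X) Y) = \Tr(X \Lambda^\dagger(Y))$ with $\Lambda = \mathrm{id}_d \otimes \Psi$ delivers
\[
0 > \Tr(W \rho) = \Tr\bigl(\ketbra{\Omega}\,(\mathrm{id}_d \otimes \Phi)(\rho)\bigr) = \bra{\Omega} (\mathrm{id}_d \otimes \Phi)(\rho) \ket{\Omega},
\]
whence $(\mathrm{id}_d \otimes \Phi)(\rho) \not\geqslant 0$.

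The main obstacle is the Choi--Jamio\l{}kowski bridge converting the functional inequality $\Tr(W\rho) < 0$ into the operator inequality $(\mathrm{id}_d \otimes \Phi)(\rho) \not\geqslant 0$: one must pass through the Hilbert--Schmidt adjoint in order to transport the partial ampliation from $\ketbra{\Omega}$ onto $\rho$, and verify along the way that this adjoint preserves $n$-positivity. The convex-geometric separation step is routine given the finite-dimensional compactness of $\mathcal{S}_n$.
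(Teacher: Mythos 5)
Your argument is correct and is precisely the standard Schmidt-number-witness proof of Terhal and Horodecki --- the reference the paper cites for this theorem without reproducing a proof: separate $\rho$ from the compact convex set $\mathrm{Sch}_n(\mathcal{H})$ by a Hermitian witness $W$, pass to the map $\Psi$ with Choi matrix $W$, use the standard criterion that $\bra{\psi}W\ket{\psi}\geqslant 0$ on all vectors of Schmidt rank at most $n$ characterizes $n$-positivity, and transport the violation $\Tr(W\rho)<0$ onto $(\mathrm{id}_d\otimes\Psi^\dagger)(\rho)$ via the Hilbert--Schmidt adjoint, which indeed preserves $n$-positivity. The only quibble is a labelling slip: what you call the ``only if'' direction (existence of a detecting $n$-positive map implies $\mathrm{SR}(\rho)>n$) is actually the ``if'' direction of the stated equivalence, but the mathematics of both implications is sound.
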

Thus, the contrapositive of Theorem~\ref{char_n_ent_states_thm} yields the following hierarchy of closed convex subsets of the set $\mathcal{S}(\mathcal{H} \otimes \mathcal{H})$ of states on $\mathcal{H} \otimes \mathcal{H}$
\begin{equation}
\label{hierarchy_set_states}
\mathrm{Sep}(\mathcal{H}) \equiv \mathrm{Sch}_1(\mathcal{H}) \subseteq \mathrm{Sch}_2(\mathcal{H})   \subseteq \dots \subseteq \mathrm{Sch}_d(\mathcal{H}) \equiv \mathcal{S}(\mathcal{H} \otimes \mathcal{H}),  \end{equation}
where $\mathrm{Sch}_n(\mathcal{H})  $ denotes the set of states on $\mathcal{H} \otimes \mathcal{H}$ with Schmidt number at most equal to $n$ and, in particular, $\mathrm{Sch}_1(\mathcal{H})   \equiv \mathrm{Sep}(\mathcal{H})$ stands for the set of separable states on $\mathcal{H} \otimes \mathcal{H}$, see Equation~\eqref{sep_def}.

Importantly, for bipartite qubit systems, we can replace the characterization of separability stated in Theorem~\ref{char_n_ent_states_thm} with the partial transposition criterion~\cite{horodecki1996necessary}, namely, it is sufficient (and necessary) to check the condition $(\mathrm{id}_n \otimes T)(\rho) \geqslant 0$ for separability of $\rho \in \mathcal{S}(\mathcal{H} \otimes \mathcal{H})$, where $T$ is the transposition map with respect to some basis of $B(\mathcal{H})$. The latter result is a consequence of the fact that, at variance with what happens at higher dimensions~\cite{choi1975positive,choi1980some}, all positive qubit maps are decomposable~\cite{stormer1963positive}. Recall that a positive map $\Phi : B(\mathbb{C}^d) \mapsto B(\mathbb{C}^d)$ is called \textit{decomposable} if it can be written as
\begin{equation}
    \Phi = \Phi_1 + T \circ \Phi_2, \quad \Phi_1,\Phi_2 \in \mathcal{CP}, 
\end{equation}
where $\mathcal{CP}$ denotes the cone of completely positive maps.
In particular, any map $\Phi$ on $B(\mathbb{C}^d)$ of the form $\Phi = T \circ \Phi^\prime$ for some $\Phi^\prime \in \mathcal{CP}$ is called \textit{completely copositive}~\cite{woronowicz1976positive}.

We can now enrich the hierarchy of intermediate notions between positivity and complete positivity by introducing the generalized Schwarz property~\cite{carlen2023characterizing,Lindblad76}. 
\begin{defn}[$n$-generalized Schwarz property]
    Let $\Phi$ be a positive map on $B(\mathcal{H})$. Then $\Phi$ is called generalized Schwarz if it satisfies the operator inequality
    \begin{equation}
\label{gen_OSI}
 \Phi(X^\ast X) \geqslant \Phi(X)^\ast \Phi(\mathbb{I})^{+} \Phi(X), \quad \forall X \in B(\mathcal{H}),
    \end{equation}
    where $Y^{+}$ denotes the Moore-Penrose pseudoinverse of $Y \in B(\mathcal{H})$~\cite{penrose1955generalized}.
    The map $\Phi$ is called $n$-generalized Schwarz if its $n$-fold ampliation $\Phi^{(n)}$ in~\eqref{ampl_map} is generalized Schwarz.
\end{defn}
In order to justify the name of the latter property, note that the operator inequality~\eqref{gen_OSI} is a special case of the inequality 
\begin{equation}
\label{op_ineq_2_pos}
    \Phi(X^\ast Y^{+} X) \geqslant \Phi(X)^\ast \Phi(Y)^{+} \Phi(X),
\end{equation}
holding for all $Y \geqslant 0$ and $X \in B(\mathcal{H})$ with $\Ker(Y) \subseteq \Ker(X^\ast)$, which, in turn, is an operator version of the Cauchy-Schwarz inequality~\cite{conway1994course}
\begin{equation}
\label{CS_states}
    |\varphi(X^\ast Y)|^2 \leqslant \varphi(X^\ast X) \varphi(Y^\ast Y) , \quad \forall X,Y \in B(\mathcal{H}),
\end{equation}
true for any state $\varphi : B(\mathcal{H}) \mapsto \mathbb{C}$.
If a generalized Schwarz map $\Phi$ is unital, then the operator inequality~\eqref{gen_OSI} reduces to 
\begin{equation}
\label{OSI}
    \Phi(X^\ast X) \geqslant \Phi(X)^\ast \Phi(X), \quad\forall X \in B(\mathcal{H}),
\end{equation}
usually called the operator Schwarz inequality and, accordingly, we will call $\Phi$ a \textit{Schwarz} map~\cite{wolf2012quantum}.  
Similarly, a unital $n$-generalized Schwarz map $\Phi$ will be simply called a $n$\textit{-Schwarz} map. 
%Finally, let us also recall a notion called \textit{local complete positivity}~\cite{stormer1963positive,choi1974schwarz}, which is a slight generalization of the Schwarz property: a positive map is called locally completely positive when the following operator inequality 
%\begin{equation}
 %   (\alpha \Phi)(X^\ast X) \geqslant (\alpha\Phi)(X)^\ast (\alpha\Phi)(X), \quad \forall  X \in B(\mathcal{H})
%\end{equation}
%is satisfied, namely, $\alpha \Phi$ is Schwarz, for some $\alpha >0$. 
Notice that the terminology is not standardized~\cite{fagnola1999quantum,siudzinska2021interpolating}: some authors call $\Phi$ a Schwarz map when it is positive and  satisfies~\eqref{gen_OSI} or the related inequality
\begin{equation}
\label{mod_Schw_ineq}
    \|\Phi(\mathbb{I}) \|\Phi(X^\ast X) \geqslant \Phi(X)^\ast \Phi(X), \quad \forall X \in B(\mathcal{H}),
\end{equation}
which also characterizes a property known as \textit{local complete positivity}, see~\cite{choi1974schwarz,stormer1963positive} for more details.
Observe also that, as it happens with $n$-positivity, the inequalities~\eqref{gen_OSI} and~\eqref{mod_Schw_ineq} do not imply \textit{any} condition on $\Phi(\mathbb{I})$, at variance with the operator Schwarz inequality~\eqref{OSI} that yields subunitality, namely, $\Phi(\mathbb{I}) \leqslant \mathbb{I}$.  
%\brev{arrivato qui}

That being said, the physical meaning of the $n$-(generalized) Schwarz property is still not clear, at variance with the notion of $n$-positivity but, interestingly, this property often appears as a natural condition across different areas of mathematical physics. In particular, powerful results by Hiai and Petz~\cite{hiai2012quasi}, implying Lieb's concavity and convexity theorems~\cite{lieb1973convex} and the data processing inequality~\cite{Lindblad1975,muller2017monotonicity,uhlmann1977relative}, were known to be true for Schwarz maps~\cite{carlen2023characterizing}, which also emerge in recovery theorems~\cite{jenvcova2012reversibility}. Moreover, the operator Schwarz inequality~\eqref{OSI} turns out to be sufficient for the description of open-quantum-system asymptotics~\cite{amato2024decoherence,AFK_OSID,wolf2010inverse}. 

As for the sets $\mathcal{P}_n$, it is immediate to check that the sets $\mathcal{S}_n$ of $n$-generalized Schwarz maps are closed convex cones, viz. they satisfy properties~\eqref{cone_prop}-\eqref{clos_prop}.   
Moreover, it is clear from the definition that any $n$-generalized Schwarz map $\Phi$ is $n$-positive. 
Let us now discuss the analogue of Proposition~\ref{n_pos_prop} for $n$-generalized Schwarz maps. To this purpose, observe that, as noted in~\cite{carlen2023characterizing}, the $n$-generalized Schwarz property may be conveniently rephrased in matrix form as 
\begin{equation}
\label{n_gen_Schw_ineq_mat}
\begin{pmatrix}
    (\mathrm{id}_n \otimes \Phi)(\mathbb{I}_{nd}) & (\mathrm{id}_n \otimes\Phi)(X) \\
    (\mathrm{id}_n \otimes\Phi)(X)^\ast & (\mathrm{id}_n \otimes \Phi)(X^\ast X)
\end{pmatrix} \geqslant 0, \quad \forall X \in \mathcal{M}_{n}(\mathbb{C}) \otimes B(\mathcal{H}),
\end{equation}
where $\mathbb{I}_{\ell d}$ is the identity operator over $\mathbb{C}^\ell \otimes \mathcal{H}$,
as a consequence of the characterization of positive semi-definiteness based on Schur complements~\cite{zhang2006schur}.
\begin{prop}
\label{n_gen_Schw_propos}
Let $\Phi$ be a positive map on $B(\mathcal{H})$, with $\dim(\mathcal{H}) = d$. Then,
\begin{itemize}
\item[]$i.$ if $\Phi$ is a $(n+1)$-positive map, then it is a $n$-generalized Schwarz map;
    \item[]$ii.$ if $\Phi$ is $n$-generalized Schwarz, then $\Phi$ is $m$-generalized Schwarz for any $m < n$; 
    \item[]$iii.$ if $\Phi$ is $d$-generalized Schwarz, then $\Phi$ is completely positive.
\end{itemize}
\end{prop}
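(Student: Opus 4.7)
The overall plan is to base all three arguments on the Schur-complement matrix form~\eqref{n_gen_Schw_ineq_mat}. Writing $\Phi^{(n)}=\mathrm{id}_n\otimes \Phi$, this states that $\Phi$ is $n$-generalized Schwarz if and only if, for every $X\in \mathcal{M}_n(\mathbb{C})\otimes B(\mathcal{H})$,
\[
(\mathrm{id}_{2n}\otimes\Phi)(VV^{\ast}) \geqslant 0,\qquad VV^{\ast}=\begin{pmatrix} \mathbb{I}_{nd} & X \\ X^{\ast} & X^{\ast} X \end{pmatrix},\qquad V=\begin{pmatrix}\mathbb{I}_{nd} \\ X^{\ast}\end{pmatrix}.
\]
Once this viewpoint is adopted, part (iii) is immediate: by definition a generalized Schwarz map is positive, so $\Phi$ being $d$-generalized Schwarz forces $\Phi^{(d)}$ to be positive, i.e.\ $\Phi$ is $d$-positive, and Proposition~\ref{n_pos_prop}\,(ii) upgrades this to complete positivity.

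For part (ii) with $m<n$, the strategy is to embed any $X\in \mathcal{M}_m(\mathbb{C})\otimes B(\mathcal{H})$ as the block-diagonal $Y=X\oplus 0\in \mathcal{M}_n(\mathbb{C})\otimes B(\mathcal{H})$. Since $\mathrm{id}_n$ respects this decomposition, $\Phi^{(n)}(Y)=\Phi^{(m)}(X)\oplus 0$, $\Phi^{(n)}(Y^{\ast}Y)=\Phi^{(m)}(X^{\ast}X)\oplus 0$, and $\Phi^{(n)}(\mathbb{I}_{nd})$ is itself block diagonal with upper-left block $\Phi^{(m)}(\mathbb{I}_{md})$, so its Moore-Penrose pseudoinverse splits in the same way. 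Extracting the upper-left $m\times m$ block from the $n$-generalized Schwarz inequality applied to $Y$ then reproduces the $m$-generalized Schwarz inequality for $X$.

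Part (i) is the substantial step. The outer product $VV^{\ast}$ is manifestly positive, but the naive argument that applies $(\mathrm{id}_{2n}\otimes\Phi)$ directly only delivers positivity of the image under a $2n$-positivity hypothesis on $\Phi$. To sharpen this to $(n+1)$-positivity, I would bound the Schmidt number of $VV^{\ast}$ in the bipartition $\mathbb{C}^{2n}\mid \mathcal{H}$, where $\mathbb{C}^{2n}\simeq \mathbb{C}^2\otimes \mathbb{C}^n$. Fix a product orthonormal basis $\{f_a\otimes g_b\}$ of $\mathbb{C}^n\otimes \mathcal{H}$ and decompose $VV^{\ast}=\sum_k |V e_k\rangle\langle V e_k|$ with $e_k=f_a\otimes g_b$. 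Each vector $|V e_k\rangle=\ket{0}\otimes f_a\otimes g_b+\ket{1}\otimes X^{\ast}(f_a\otimes g_b)$ has Schmidt rank at most $n+1$: the first summand is a product vector with left factor $\ket{0}\otimes f_a$, while the second summand contributes at most $n$ additional left Schmidt vectors, all of the form $\ket{1}\otimes f_c$ with $c=1,\dots,n$. Consequently each rank-one term $|V e_k\rangle\langle V e_k|$ is supported on $W_k\otimes \mathcal{H}$ for some subspace $W_k\subseteq \mathbb{C}^{2n}$ with $\dim W_k\leqslant n+1$, and the restriction of $\mathrm{id}_{2n}\otimes \Phi$ to this subspace acts as $\mathrm{id}_{W_k}\otimes \Phi$, which preserves positivity because $\Phi$ is $(n+1)$-positive. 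Summing over $k$ gives $(\mathrm{id}_{2n}\otimes \Phi)(VV^{\ast})\geqslant 0$, as required.

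The main obstacle is this sharp Schmidt-rank count in part (i): one must recognize that, for the particular operator $VV^{\ast}$ produced by the Schur-complement trick, the $2n$-dimensional ancilla appearing in~\eqref{n_gen_Schw_ineq_mat} is effectively only $(n+1)$-dimensional. This observation is in the same spirit as the Schmidt-number characterization of Theorem~\ref{char_n_ent_states_thm}. Once it is in place, parts (ii) and (iii) reduce to straightforward block-matrix bookkeeping and to a direct invocation of Proposition~\ref{n_pos_prop}\,(ii), respectively.
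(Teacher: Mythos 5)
Your parts \emph{ii.} and \emph{iii.} follow essentially the same route as the paper: for \emph{ii.} you embed $X$ as the block-diagonal $X\oplus 0$ inside $\mathcal{M}_n(\mathbb{C})\otimes B(\mathcal{H})$ and extract the relevant sub-blocks of the positive block matrix~\eqref{n_gen_Schw_ineq_mat}, which is exactly the paper's argument, and for \emph{iii.} both you and the paper simply observe that $d$-generalized Schwarz implies $d$-positive and invoke Proposition~\ref{n_pos_prop}\,\emph{ii}. The genuine difference is part \emph{i.}: the paper does not prove it at all, deferring to the reference by Carlen and M\"uller-Hermes, whereas you supply a complete and correct argument. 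Your key step --- writing the block matrix as $VV^{\ast}$ with $V=\bigl(\begin{smallmatrix}\mathbb{I}_{nd}\\ X^{\ast}\end{smallmatrix}\bigr)$, decomposing $VV^{\ast}=\sum_k \ketbra{Ve_k}{Ve_k}$ over a product basis $e_k=f_a\otimes g_b$, and noting that each $\ket{Ve_k}=\ket{0}\otimes f_a\otimes g_b+\ket{1}\otimes X^{\ast}(f_a\otimes g_b)$ is supported on $W_k\otimes\mathcal{H}$ with $\dim W_k\leqslant 1+n$, so that $(n+1)$-positivity suffices termwise --- is sound; it is the Schmidt-number counting that underlies the cited result and is consistent in spirit with Theorem~\ref{char_n_ent_states_thm}. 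The only point worth flagging (which affects the paper equally) is that the equivalence between the block-matrix form~\eqref{n_gen_Schw_ineq_mat} and the pseudoinverse inequality~\eqref{gen_OSI} via Schur complements also encodes the range condition $\mathrm{Ran}\,\Phi^{(n)}(X)\subseteq\mathrm{Ran}\,\Phi^{(n)}(\mathbb{I}_{nd})$; your proof actually establishes this stronger block-matrix statement, so nothing is lost.
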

\begin{proof}
i. See~\cite{carlen2023characterizing}.

ii. Let us prove that, if $\Phi$ is $n$-generalized Schwarz, then $\Phi$ is $(n-k)$-generalized Schwarz for all $k = 1, \dots , n-1$. If $k \in \{ 1, \dots , n-1 \} $,
then the inequality~\eqref{n_gen_Schw_ineq_mat} for 
% \begin{equation}
% \label{n_gen_Schw_ineq_mat}
% \begin{pmatrix}
%    (\mathrm{id}_n \otimes \Phi)(\mathbb{I}_{nd}) & (\mathrm{id}_n \otimes \Phi)(X) \\
%     (\mathrm{id}_n \otimes \Phi)(X)^\ast & (\mathrm{id}_n \otimes \Phi)(X^\ast X)
% \end{pmatrix} \geqslant 0, \quad \forall X \in B(\mathcal{H}).
% \end{equation}
$X = \tilde{X} \oplus 0$, with $\tilde{X} \in \mathcal{M}_{n-k}(\mathbb{C}) \otimes B(\mathcal{H})$, reads
%then the assumption in the matrix form~\eqref{n_gen_Schw_ineq_mat} when 
% \begin{equation}
% \label{n_gen_Schw_ineq_mat}
% \begin{pmatrix}
%    (\mathrm{id}_n \otimes \Phi)(\mathbb{I}_{nd}) & (\mathrm{id}_n \otimes \Phi)(X) \\
%     (\mathrm{id}_n \otimes \Phi)(X)^\ast & (\mathrm{id}_n \otimes \Phi)(X^\ast X)
% \end{pmatrix} \geqslant 0, \quad \forall X \in B(\mathcal{H}).
% \end{equation}
%$X = \tilde{X} \oplus 0$ with $\tilde{X} \in \mathcal{M}_{n-k}(\mathbb{C}) \otimes B(\mathcal{H})$ reads
%\eqref{n_gen_Schw_ineq_mat} becomes 
\begin{equation}
    \!\!\!\begin{pmatrix}
   (\mathrm{id}_{n-k} \otimes \Phi)(\mathbb{I}_{(n-k)d}) & &(\mathrm{id}_{n-k} \otimes \Phi)(\tilde{X}) &  \\
   & (\mathrm{id}_{k} \otimes \Phi)(\mathbb{I}_{kd}) & & 0_{kd}\\
    (\mathrm{id}_{n-k} \otimes \Phi)(\tilde{X})^\ast & &(\mathrm{id}_{n-k} \otimes \Phi) (\tilde{X}^{\ast} \tilde{X}) & \\
    & 0_{kd} & & 0_{kd}
\end{pmatrix} \geqslant 0,
\end{equation}
where $0_{\ell d}$ is the zero operator on $\mathbb{C}^\ell \otimes \mathcal{H}$ and the unspecified blocks are zero operators of suitable size. The latter condition implies that
\begin{equation}
    \begin{pmatrix}
   (\mathrm{id}_{n-k} \otimes \Phi)(\mathbb{I}_{(n-k)d}) & (\mathrm{id}_{n-k} \otimes \Phi)(\tilde{X})  \\
   (\mathrm{id}_{n-k} \otimes \Phi)(\tilde{X})^\ast &(\mathrm{id}_{n-k} \otimes \Phi) (\tilde{X}^{\ast} \tilde{X})
    \end{pmatrix} \geqslant 0,
\end{equation}
which is equivalent to $(n-k)$-generalized Schwarz property for the arbitrariness of $\tilde{X} \in \mathcal{M}_{n-k}(\mathbb{C}) \otimes B(\mathcal{H})$.

iii. If $\Phi$ is $d$-generalized Schwarz, then it is $d$-positive, therefore the result follows from claim~\textit{ii.} of Proposition~\ref{n_pos_prop}.
\end{proof}

In particular, the first implication of Proposition~\ref{n_gen_Schw_propos} implies that any $2$-positive map is generalized Schwarz and, in fact, it was proved in~\cite{carlen2023characterizing} that a linear map $\Phi$ is $2$-positive if and only if $\Phi(\mathbb{I}) \geqslant 0$ and the operator inequality~\eqref{op_ineq_2_pos} holds.
% \begin{equation}
%     \Phi(X^\ast Y^{-1} X) \geqslant \Phi(X)^\ast \Phi(Y)^{-1} \Phi(X)
% \end{equation}
%holds for all $Y \geqslant 0$ and $X \in B(\mathcal{H})$ with $\Ker(Y) \subseteq \Ker(X^\ast)$. 
See again~\cite{carlen2023characterizing} for a similar characterization of the generalized Schwarz property.
On the basis of all the properties discussed so far, if $\mathcal{S}_n$ denotes the cone of $n$-generalized Schwarz maps, then we have 
\begin{equation}
\label{pos_prop_hierarchy}
\mathcal{CP} = \mathcal{P}_d = \mathcal{S}_d \subset \mathcal{P}_{d-1} \subset \dots \subset \mathcal{P}_2 \subset \mathcal{S}_1 \subset \mathcal{P}_1.
\end{equation}
Note that all the inclusions are strict. 

To illustrate this, consider the one-parameter family of unital trace-preserving positive maps 
\begin{equation}
\label{gen_red_map}
\Phi_a(X)= \frac{1}{d-a}(  \Tr(X)\mathbb{I} - aX), \quad X \in B(\mathcal{H}),\; a \in \mathbb{R}.
\end{equation}
Clearly, $\Phi_a$ is completely positive if $a < 0$.
Moreover, observe that, when $a = 1$, the map $\Phi_{a}$ reduces to the reduction map, appearing in entanglement theory~\cite{horodecki1999reduction}, while, when $a = \frac{1}{d-1}$, the map $\Phi_{a}$ reduces, up to a constant, to the example of $(d-1)$-positive but not completely positive map discussed by Choi in~\cite{Choi_1972}. Importantly, $n$-positivity and the $n$-Schwarz property can be characterized as follows~\cite{tomiyama1985geometry}:
\begin{align}
   \Phi_a &\in \mathcal{P}_n \Leftrightarrow a \leqslant \frac{1}{n},  \quad \quad 1\leqslant n \leqslant d, \\
   \Phi_a &\in \mathcal{S}_n \Leftrightarrow a \leqslant \frac{d}{1+nd} , \quad 1\leqslant n \leqslant d-1,
\end{align}
therefore, given $1 \leqslant n \leqslant d-1$, $\Phi_a$ is $n$-positive but not $n$-Schwarz when $ \frac{d}{1+nd} <  a \leqslant \frac{1}{n}$. See also~\cite{sun2022k} for related results. Furthermore, the strictness of the inclusions in the hierarchy~\eqref{pos_prop_hierarchy} implies by Theorem~\ref{char_n_ent_states_thm} the same property for the chain~\eqref{hierarchy_set_states}. 

Consider also the following example similar to~\eqref{gen_red_map}:
\begin{equation}
\label{gen_red_map_transp}
\Phi_{a,T}(X)= \frac{1}{d-a}(  \Tr(X)\mathbb{I} - aX^T), \quad X \in B(\mathbb{C}^d),\; a \in \mathbb{R}.
\end{equation}
When $a \geqslant 0$, one gets that~\cite{tomiyama1985geometry}
\begin{equation}
  \Phi_{a,T} \in \mathcal{P}_{1} \Leftrightarrow   \Phi_{a,T} \in \mathcal{CP} \Leftrightarrow a \in [0,1],
\end{equation}
namely, positivity and complete positivity for $\Phi_{a,T}$ with $a \geqslant 0$ are equivalent. Interestingly, if we relax the condition $a \geqslant 0$, then this is no longer true. Indeed,
\begin{align}
   \Phi_{a,T} \in \mathcal{P}_{1} &\Leftrightarrow a \leqslant 1, \quad    \Phi_{a,T} \in \mathcal{S}_{1} \Leftrightarrow \frac{2d}{1-\sqrt{4d+1}} \leqslant a \leqslant 1, \\
   \Phi_{a,T} \in \mathcal{P}_{n} &\Leftrightarrow   \Phi_{a,T} \in \mathcal{S}_{n} \Leftrightarrow -1 \leqslant a \leqslant 1, \quad 2 \leqslant n \leqslant d,
\end{align}
and thus positivity and Schwarz properties are not equivalent, at variance with $n$-positivity and $n$-Schwarz positivity for $n \geqslant 2$. In other words, by allowing the parameter $a$ to be negative, we break the equivalence between positivity and Schwarz property and between the latter property and $2$-positivity. In particular, if $a = -2$ and $d=2$, then we recover the map 
\begin{equation}
    \Phi_{-2,T}(X) = \frac{1}{2}\left( \Tr(X) \frac{\mathbb{I}}{2} + X^T \right), \quad X \in B(\mathbb{C}^2), 
\end{equation}
namely, the first example of Schwarz map not being $2$-positive shown in~\cite{choi1980some}. 

Besides their differences, both maps $\Phi_{a}$ and $\Phi_{a,T}$ are obtained by linearly combining a positive trace-preserving map with the contraction onto the maximally mixed state, viz. they are of the form
\begin{equation}
\label{SPA}
\Psi_{\lambda}(X) = \lambda \Tr(X) \frac{\mathbb{I}}{d} + (1-\lambda) \Phi(X), \quad X \in B(\mathcal{H}), \; \lambda \in \mathbb{R},
\end{equation}
where $\Phi = \mathrm{id}_d$ (respectively, $\Phi = T$) in the case of $\Phi_a$ (respectively, $\Phi_{a,T}$). When $\Phi \not \in \mathcal{CP}$ and the mixture $\Psi_{\lambda} \in \mathcal{CP}$, then this technique is known as \textit{structural physical approximation} of $\Phi$~\cite{bae2017designing,horodecki1999reduction,shultz2016structural}, and it is employed in entanglement theory in order to construct a physical approximation of a positive map that is able to detect some desired entangled state. More mathematically, the mixing operation~\eqref{SPA} can be employed to construct, from a positive map, new maps with a higher degree of positivity, such as Schwarz maps~\cite{chruscinski2020kadison}. 

Let us recall another intriguing notion lying between positivity and complete positivity. As anticipated in 
Subsection~\ref{open_dyn_sec}, in general the tensor product of two positive maps is not positive, while this is true if the two maps are either completely positive or, similarly, completely copositive. Motivated by this observation, we say that, given $n \in \mathbb{N}$, a positive map $\Phi$ is $n$\textit{-tensor-stable positive} if $\Phi^{\otimes n}$ is positive and \textit{tensor-stable positive} if it is $n$-tensor-stable positive for all $n \in \mathbb{N}$~\cite{filippov2017positive,hayashi2006quantum,muller2016positivity}. In particular, completely positive or completely co-positive maps are tensor-stable positive, and we can call any ($n$-)tensor-stable positive map \textit{non-trivial} if it is neither completely positive nor completely copositive. Importantly, for any $n \in \mathbb{N}$ and any $d \geqslant 2$, there exists a non-trivial $n$-tensor-stable map on $B(\mathcal{H})$ with $\dim(\mathcal{H}) = d$, while the non-existence of non-trivial tensor-stable maps was proved in the two-dimensional case~\cite{muller2016positivity}. 

To the best of our knowledge, the existence of non-trivial tensor-stable maps beyond the qubit case is still an open problem and, interestingly, it would imply the existence of non-positive partial transpose bound entangled states~\cite{bengtsson2017geometry}, which may be employed in quantum key distribution~\cite{horodecki2005secure,horodecki2008low} as well as for the violation of Bell inequalities~\cite{vertesi2014disproving} in spite of their undistillability.     

Finally, let us point out another possible way of interpolating positivity and complete positivity under the trace-preservation or unitality assumptions~\cite{siudzinska2021interpolating}. Specifically, the key observation is the fact that, for any Hermiticity-preserving trace-preserving map $\Phi$, we have~\cite{chruscinski2022dynamical} 
\begin{align}
\label{char_pos_Russo_Dye}
\Phi \in \mathcal{P} &\Leftrightarrow \| \Phi(X) \|_{1} \leqslant \| X \|_{1}, \quad \forall X^\ast = X \in B(\mathcal{H}), \\
\label{char_Compl_pos_Russo_Dye}
\Phi \in \mathcal{CP} &\Leftrightarrow \| (\mathrm{id}_d \otimes \Phi)(X) \|_{1} \leqslant \| X \|_{1}, \quad \forall X^\ast = X \in \mathcal{M}_{d}(\mathbb{C}) \otimes B(\mathcal{H}),
\end{align}
where $\|X\|_{1} := \Tr(|X|)$ denotes the trace norm of $X \in B(\mathcal{H})$. 
%because of (the dual of) the Russo-Dye theorem~\cite{russo1966note}. 
Let us denote with $ \mathcal{V}(S) := \mbox{span}_{\mathbb{R}}(S)$ the real linear subspace generated by a finite subset $S \subset B(\mathcal{H})$. Then we say that a Hermiticity-preserving trace-preserving map $\Phi$ is $n$-\textit{partially contractive} if and only if, for any linearly independent set of states $\{ \rho_{1} , \dots , \rho_{n} \}$, one has 
\begin{equation}
    \| (\mathrm{id}_d \otimes \Phi)(X) \|_{1} \leqslant \| X \|_1, \quad \forall X \in  B(\mathcal{H})_{\mathrm{h}} \otimes  \mathcal{V}(\{ \rho_{1}, \dots , \rho_{n} \}),
\end{equation}
where $B(\mathcal{H})_{\mathrm{h}}$ is the space of Hermitian operators on $\mathcal{H}$.
Thus, as a consequence of~\eqref{char_pos_Russo_Dye} and~\eqref{char_Compl_pos_Russo_Dye}, we have the following hierarchy of properties for positive trace-preserving maps
\begin{equation}
\mathcal{CPTP} = \mathcal{C}_{d^2} \subseteq \mathcal{C}_{d^2 - 1} \subseteq \dots \subseteq \mathcal{C}_{2} \subseteq \mathcal{C}_{1} = \mathcal{PTP},  
\end{equation}
where $\mathcal{C}_{n}$ denotes the convex set of $n$-partially contractive maps, and $\mathcal{PTP}$ ($\mathcal{CPTP}$) denotes the set of (completely) positive trace-preserving maps. Of course, an analogous hierarchy of properties may be introduced for unital positive maps in view of their contractivity with respect to the operator norm~\cite{chruscinski2022dynamical}.

\section{Classical-quantum correspondence and back}
\label{class_quant_sec}
In this section we  address another aspect of the interplay between the classical and quantum realms: how classical states and evolutions can be mapped into quantum ones, and vice versa. 
From now on, let us fix an orthonormal basis $\mathcal{B}=\{ \ket{i} \}_{i=1}^d$ of $\mathcal{H}$, with respect to which any operator on $\mathcal{H}$ can be represented as a matrix of order $d$, and upon which any mapping between classical and quantum states (or evolutions) will depend.

As already said, classical states are described by probability vectors $\vec{p} \in \Delta_d$, while quantum states are given by density operators $\rho \in \mathcal{S}(\mathcal{H})$, therefore we are now going to revisit several mappings between $\Delta_d$ and $\mathcal{S}(\mathcal{H})$.  
%Of course, we can obtain the embedding $\Psi^\prime$ with respect to another orthonormal basis $\{ \ket{i^\prime} \}_{i=1}^d$ as follows  
% \begin{equation}
%     \vec{p} \mapsto \Psi^\prime(\vec{p}):= U\Psi(\vec{p})U^\ast, \quad U \ket{i} := \ket{i^\prime}, \; i=1, \dots , d.
% \end{equation}
%Notice also that  henceforth quantum channels will be linear maps either on $B(\mathcal{H})$ or on $B(\mathbb{C}^d)$ with respect to $\{ \ket{i} \}_{i=1}^d$. 
The first natural correspondence between classical and quantum states which we consider is the following
\begin{equation}
\label{emb_1}
\begin{split}
\Gamma: \Delta_d &\rightarrow \mathcal{S}(\mathcal{H}) , \\
 \vec{p} = \begin{pmatrix}
p_1 \\
p_2 \\
\vdots \\
p_d
\end{pmatrix} \mapsto \rho_{\vec{p}} \equiv \Gamma(\vec{p}) &:= \begin{pmatrix}
p_1 & & & \\
& p_2 & & \\
& & \ddots & \\
& & & p_d
\end{pmatrix}
= \sum_{i=1}^d p_i \ketbra{i}{i},
\end{split}
\end{equation}
namely, we regard $\vec{p}$ as the diagonal of the corresponding density operator $\rho_{\vec{p}}$ or, in other words, the populations of $\rho_{\vec{p}}$ are the components of $\vec{p}$. An alternative mapping between $\Delta_d$ and $\mathcal{S}(\mathcal{H})$, introduced in the literature since the 1950s~\cite{riccia1966wave,LOINGER1962,schonberg1952application,sudarshan1976interaction}, reads
\begin{equation}
\label{classic_wave_func_map}
\Pi_{\vec{\phi}}: \Delta_d \rightarrow \mathcal{S}(\mathcal{H}), \qquad  \vec{p} = \begin{pmatrix}
p_1 \\
p_2 \\
\vdots \\
p_d
\end{pmatrix} \mapsto \Pi_{\vec{\phi}}(\vec{p}) := \ketbra*{\psi_{\vec{p}}^{(\vec{\phi})}}{\psi_{\vec{p}}^{(\vec{\phi})}}, 
\end{equation}
where
\begin{equation}
\label{class_wave_fun}
    \ket*{\psi_{\vec{p}}^{(\vec{\phi})}} := \begin{pmatrix}
\sqrt{p_1}e^{\mathrm{i} \phi_1} \\
\sqrt{p_2}e^{\mathrm{i} \phi_2} \\
\vdots \\
\sqrt{p_d}e^{\mathrm{i} \phi_d}
\end{pmatrix} = \sum_{j=1}^d \sqrt{p_j} e^{\mathrm{i} \phi_j} \ket{j},
\end{equation}
 with $\vec{\phi}=(\phi_{i})_{i=1}^d \in \mathbb{R}^d$. In this case, at variance with  $\Gamma$, the map $\Pi_{\vec{\phi}}$ yields density operators with nonzero coherences, namely, off-diagonal elements, dependent on $\vec{\phi}$. Moreover, we can interpret $\ket*{\psi_{\vec{p}}^{(\vec{\phi})}}$ as a classical ``wave function'' on a finite-dimensional Hilbert space associated with the probability distribution $\vec{p}$ via Born's rule, where the $\phi_i$ are \textit{classically} unobservable phase factors. Consistently, it is possible to derive the Fisher classical information metric on $\Omega = \{ 1, \dots , d\}$ from the Fubini-Study one on the set of pure states by using the mapping $\Pi_0$, while the general correspondence $\Pi_{\vec{\phi}}$ allows us to recover the Fisher quantum information metric defined in the standard sense~\cite{helstrom1967minimum}, see~\cite{facchi2010classical} for more details.
 
 Concerning dynamics, we have already noted in Subsection~\ref{classic_sec} that the classical evolution of probability vectors is described by a column-stochastic matrix, whereas the Schr{\"o}dinger dynamics, governing the time evolution of quantum states, is given by a quantum channel (see Subsection~\ref{open_dyn_sec}). Thus, we shall now describe a possible mapping between column-stochastic matrices and quantum channels (with respect to the incoherent basis $\mathcal{B}$ of $\mathcal{H}$).
 
Specifically, given a column-stochastic matrix $S = (S_{ij})_{i,j=1}^d$, we can associate with it a quantum channel $\Phi_S$ in the following natural way:
\begin{equation}
\label{class_channel}
S \xmapsto{\Phi} \Phi_S, \quad \Phi_S(X) := \sum_{i,j=1}^d S_{ij} \langle j |X| j \rangle \ketbra{i}{i}, \;\; X \in B(\mathcal{H}).
\end{equation}
Complete positivity of $\Phi_{S}$ immediately follows from Proposition~\ref{comm_dom_range_pos_prop}, while trace-preservation is a consequence of property~\eqref{norm_col-stoch}.
Notice that the channel $\Phi_S$ defined by~\eqref{class_channel} corresponds to the column-stochastic matrix $S$ in the following sense 
\begin{equation}
\label{class_quant_corresp}
S_{ij} = \Tr(\ketbra{i}{i}\Phi_S(\ketbra{j}{j})), \quad i,j=1, \dots , d,
\end{equation}
implying that, by adopting the embedding~\eqref{emb_1}, 
\begin{equation}
\label{quantum_rep}
\vec{p}^{\,\prime} = S \vec{p} \iff \rho_{\vec{p}^{\,\prime}} = \Phi_S(\rho_{\vec{p}}).
\end{equation}
 It is also easy to check that the mapping $\Phi$ defined by~\eqref{class_channel} is a representation of the semigroup of column-stochastic matrices on $B(\mathcal{H})$, namely, $\Phi_{TS} = \Phi_T \circ \Phi_S$ for any pair $S,T$ of column-stochastic matrices.

 Therefore, the embedding $\Gamma$ is equivariant under the action of the Markov semigroup of stochastic matrices, namely $\Gamma \circ S = \Phi_S\circ \Gamma$, which can be understood as the  commutative diagram
\begin{equation}
  \begin{tikzcd}
\Delta_d \arrow[r,"S"] \arrow[d,"\Gamma"'] & \Delta_d \arrow[d,"\Gamma"] \\
\mathcal{S}(\mathcal{H}) \arrow[r,"\Phi_S"'] & \mathcal{S}(\mathcal{H})
\end{tikzcd}  
\end{equation}
which holds for all column-stochastic matrices $S$.
By using the mapping~\eqref{class_channel}, we can thus associate  a discrete-time (respectively, continuous-time) semigroup of column-stochastic matrices with a discrete-time (respectively, continuous-time) semigroup of quantum channels. 

Moreover, in the continuous-time case, we can obtain a quantum embedding of a classical Markov evolution in a different way: given a classical dynamics $(e^{tL})_{t\in\mathbb{R}^+}$ with Kolmogorov generator $L = (L_{ij})_{i,j=1}^d$ of the form~\eqref{Kolmogorv_struc}, we can introduce the following GKLS generator 
\begin{equation}
\label{class_gen}
\begin{split}
    \mathcal{L}_{L}(X) &:= \sum_{i=1}^d \left( \sum_{j=1}^d W_{ij}  \langle j |X| j \rangle  \ketbra{i}{i} - \frac{1}{2} W_{i} \{ \ketbra{i} , X \}  \right) \\&= \Phi_{W}(X) - \frac{1}{2} \{ \Phi_{W}^\dagger(\mathbb{I}) , X \}, \quad X \in B(\mathcal{H}),
    \end{split}
\end{equation}
with $\Phi_{W}$ being the completely positive map corresponding to the non-negative matrix $W = (W_{ij})_{i,j=1}^d$ via~\eqref{class_channel}, where we set $W_{ii} :=0$ for all $i=1, \dots , d$. 
Clearly, the generator~\eqref{class_gen} fulfills the condition
\begin{equation}
\label{class_quant_corresp_gen}
L_{ij} = \Tr(\ketbra{i}{i}\mathcal{L}_{L}(\ketbra{j}{j})), \quad i,j=1, \dots , d,
\end{equation}
which is the analogue of~\eqref{class_quant_corresp}. Interestingly, the generator~\eqref{class_gen} is dissipative, i.e., it has a GKLS representation with zero Hamiltonian part. In this sense and in our finite-dimensional setting, we can state that there is no classical analogue of a Hamiltonian GKLS generator, i.e., $\mathcal{L}(X) = -\mathrm{i} [H,X]$ with $H=H^\ast$ and $X \in B(\mathcal{H})$.

Let us now discuss how to reduce quantum states and evolutions into classical ones. First, a quantum state $\rho$ can be associated with a family of classical probability vectors. Given any  orthonormal basis $\mathcal{B}=\{\ket{i}\}_{i=1}^d$, we define the corresponding probability vector
\begin{equation}
\label{prob_vec_rho}
  \rho \mapsto  \Omega_{\mathcal{B}}(\rho) = \vec{p} = (p_i)_{i=1}^d, \quad p_i := \langle i|\rho| i \rangle , \quad i=1, \dots , d .
\end{equation}
Thus each orthonormal basis $\mathcal{B}$ provides a probability vector $\vec{p}$.

In this sense, a single quantum state $\rho$ determines an infinite family of such probability vectors---one for every possible basis.
However, as pointed out in~\cite{man2008semigroup}, knowing the full family $\{\Omega_{\mathcal{B}}(\rho)\}_{\mathcal{B}}$ is more than enough: it suffices to know a finite ``quorum'' of them---called quantum tomograms---to reconstruct 
$\rho$ completely. 

Now let us overview classical reductions of a given quantum dynamics. 
Just as a quantum state $\rho$ can be projected onto classical probability vectors, a quantum channel or, more generally, a positive trace-preserving map $\Phi$ can be reduced to a family of stochastic matrices acting on these classical shadows.
Given an orthonormal basis, the associated  column-stochastic matrix reads
\begin{equation}
\label{quant_class_corresp}
S_{ij} = \Tr(\ketbra{i}{i}\Phi(\ketbra{j}{j})), \quad i,j=1, \dots , d.
\end{equation}
It is worth observing that, given a linear map $\Phi$, the fact that the matrix $S = (S_{ij})_{i,j=1}^d$ defined by~\eqref{quant_class_corresp} is stochastic for any choice of the basis $\{ \ket{i} \}_{i=1}^d$ of $\mathcal{H}$ does \textit{not} guarantee that $\Phi$ is positive and trace-preserving and, in particular, a quantum channel, see~\cite{benatti2024quantum} for a counterexample. In fact, one has to require the stronger condition
\begin{equation}
    \langle \psi | \Phi(\ketbra{\varphi}{\varphi}) | \psi \rangle \geqslant 0, \quad  \forall \ket{\psi},\ket{\varphi} \in \mathcal{H},
\end{equation}
also involving non-orthogonal pairs of vectors.

Let us discuss more in depth the mapping~\eqref{quant_class_corresp} in the case of unitary evolutions. Let 
\begin{equation}
 \Phi_U(X) = UXU^\ast,    
\end{equation}
with $U$ being unitary and $X \in B(\mathcal{H})$, and compute
\begin{equation}
    \label{S_unit_case}
S_{ij} = \langle i |\Phi_U(\ketbra{j}{j})| i \rangle = \langle i | U \ketbra{j}{j} U^\ast | i \rangle = |U_{ij}|^2, \quad   U_{ij} := \langle i | U | j \rangle .
\end{equation}
%where $U = (U_{ij})_{i,j=1}^d$, %is the matrix representation of $U$ with respect to the basis $\{ \ket{i} \}_{i=1}^d$. 
Thus $S$ is orthostochastic and, consequently, column-stochastic.
A less direct but instructive derivation of~\eqref{S_unit_case}, which is more similar to what is done in~\cite{man2008semigroup}, is the following. The action of $\Phi_U$
%$\Phi_U(X) = UXU^\ast$ with $X \in \mathcal{M}_d(\mathbb{C})$ (written with respect to the basis $\{\ket{i} \}_{i=1}^d$) 
on the classical state $\rho_{\vec{p}}$, corresponding to a probability vector $\vec{p}$, reads
\begin{equation}
\rho_{\vec{p}} = \sum_{i=1}^d p_i \ketbra{i}{i}
% \begin{pmatrix}
% p_1 & & & \\
% & p_2 & & \\
% & & \ddots & \\
% & & & p_d
% \end{pmatrix}
\mapsto
\Phi_U(\rho_{\vec{p}}) = \sum_{i,j=1}^d \left( \sum_{k=1}^d U_{ik} \overline{U}_{jk} p_k \right) \ketbra{i}{j}, 
% \begin{pmatrix}
%    \sum\limits_{k=1}^d |U_{1k}|^2 p_k & &  &  \\
%    & \sum\limits_{k=1}^d |U_{2k}|^2 p_k & & \\
%    & & \ddots & \\
%    & & & \sum\limits_{k=1}^d |U_{dk}|^2 p_k
% \end{pmatrix} ,
\end{equation}
% U \rho_{\vec{p}} U^\ast
%\sum_{i=1}^d p_i \ketbra{i}{i}
where the off-diagonal elements of the output matrix $\Phi_{U}(\rho_{\vec{p}})$ with respect to the basis $\{ \ket{i} \}_{i=1}^d$ are generally non-zero,
and it again induces the transformation on $\Delta_d$
\begin{equation}
\vec{p} = (p_i)_{i=1}^d \mapsto \vec{p}^{\,\prime} = S \vec{p}, \qquad p_i^\prime := \sum_{k=1}^d |U_{ik}|^2 p_k, \; i=1, \dots , d,
\end{equation}
where $S$ is again defined by~\eqref{S_unit_case}.
Then we can summarize the latter argument in this way
\begin{equation}
 \vec{p} \xmapsto{\Gamma} \rho_{\vec{p}} \xmapsto{\Phi_U} \Phi_U(\rho_{\vec{p}}) \xmapsto{\Omega} \vec{p}^{\,\prime} = S \vec{p} ,  
\end{equation}
for any $\vec{p} \in \Delta_d$. The above discussion of the mapping~\eqref{quant_class_corresp} may be generalized in a straightforward way to any quantum channel $\Phi$ by using the Kraus-Sudarshan representation~\cite{Kraus_71,sudarshan1961stochastic}. 
 
Up to now, we have seen how a column-stochastic matrix can be constructed from  a quantum channel  via the mapping~\eqref{quant_class_corresp}. Correspondingly, a discrete-time (or continuous-time)  semigroup of quantum channels naturally induces a corresponding discrete-time (or continuous-time) semigroup of column-stochastic matrices. 

Moreover, in the continuous-time case, we can obtain a classical Markov evolution from a quantum one by directly reducing the GKLS generator, as follows. Given a quantum dynamical semigroup $(\Phi_t)_{t\in\mathbb{R}^+}$ with generator $\mathcal{L}$, then we can construct the following Kolmogorov generator 
\begin{equation}
\label{quant_class_corresp_gen}
L_{ij} := \Tr(\ketbra{i}{i}\mathcal{L}(\ketbra{j}{j})), \quad i,j=1, \dots , d,
\end{equation}
which can be employed for the characterization of GKLS generators. 
In fact, according to a seminal result~\cite{kossa1972} by Kossakowski, $\mathcal{L}$ is the generator of a semigroup of positive trace-preserving maps if and only if, for \textit{any} choice of the basis $\{ \ket{i} \}_{i=1}^d$ of $\mathcal{H}$, the matrix $L = (L_{ij})_{i,j=1}^d$ with $L_{ij}$ given by~\eqref{quant_class_corresp_gen} is the generator of a semigroup of column-stochastic matrices. Interestingly, the latter characterization is the key finding for the derivation of the structure of quantum Markov generators presented in~\cite{GKS76}. Notice also that the classical reduction of the GKLS generator $\mathcal{L}$ is \textit{not} equivalent to reducing each map of the semigroup $(\Phi_t)_{t\in\mathbb{R}^+}$ according to~\eqref{quant_class_corresp}, as discussed in~\cite{benatti2024quantum}. 

% Finally, observe that all the remarks made in this section may be easily adapted to row-stochastic matrices and UCP maps. For instance, for any row-stochastic matrix $S$, the quantum embedding~\eqref{class_quant_corresp} remains the same, while~\eqref{theta_rep} becomes
% \begin{equation}
% \label{Phi_prime_S_dual_def}
% \Phi^\prime_S(X) := \sum_{k=1}^d A_k X A_k^\ast, \quad X =  \begin{pmatrix}
% x_{11} & & & \\
% & x_{22} & & \\
% & & \ddots & \\
% & & & x_{dd}
% \end{pmatrix}, \; \vec{x} = (x_{ii})_{i=1}^d.
% \end{equation}

We have already pointed out in Section~\ref{Markov_sec}  that column-stochastic matrices as well as quantum channels form a semigroup. Recall also that these two sets of maps are not groups, in fact not all column-stochastic matrices (respectively, quantum channels) are invertible and, moreover, the inverse of a column-stochastic matrix (respectively, a quantum channel) is column-stochastic (respectively, a quantum channel) if and only if it is a permutation matrix (respectively, a unitary)~\cite{buscemi2005clean,varga_book}.

However, we shall now see that it is possible to embed the two semigroups into groups~\cite{man2008semigroup}. If we start with the set of column-stochastic matrices, we can consider \textit{pseudo-stochastic} matrices~\cite{chruscinski2015pseudo} defined by the conditions
\begin{equation}
    S_{ij} \in \mathbb{R},  \quad \sum_{i=1}^d S_{ij} = 1, \quad i,j=1, \dots , d.
\end{equation}
In particular, the set of invertible pseudo-stochastic matrices is a group, containing the set of invertible column-stochastic matrices and, indeed, it is a Lie group isomorphic to the inhomogeneous  real general linear group $\mbox{IGL}(d-1,\mathbb{R})$ of order $d-1$~\cite{man2008semigroup}. Analogously, we can introduce the set of \textit{pseudo-bistochastic} matrices defined as follows 
\begin{equation}
    S_{ij} \in \mathbb{R},  \quad \sum_{i=1}^d S_{ij} = \sum_{j=1}^d S_{ij} =  1, \quad i,j=1, \dots , d,
\end{equation}
whose subset of invertible ones is a Lie group isomorphic to the homogeneous real general linear group $\mbox{GL}(d-1,\mathbb{R})$ of order $d-1$~\cite{man2008semigroup}. 

Regarding quantum channels, the set of invertible Hermiticity-preserving trace-preserving maps forms a group, which contains the group  of invertible quantum channels. Similarly, the set of invertible unital Hermiticity-preserving trace-preserving maps is a group containing the invertible unital quantum channels. As we previously observed,  the embedding $\Phi$ defined in~\eqref{quant_class_corresp} provides a representation of column-stochastic matrices on $B(\mathcal{H})$, thereby preserving their semigroup structure. In contrast, $\Phi_S$ is not invertible for any column-stochastic matrix $S$, and thus it \textit{cannot} respect the group structure of the set of invertible pseudo-stochastic matrices.  
%Finally, it is straightforward to see that the invertibility of the elements of the latter set is not preserved by the embedding $\Phi^\prime$ introduced in~\eqref{Phi_prime_S_def}.        
%If we map the probability simplex $\Delta_d$ into $\mathcal{S}(\mathcal{H})$ With both mappings $\Phi_S$ and $\Phi_S^\prime$
 %Notice also that the inverse association
% \begin{equation}
%     \Theta : \mathcal{R}(\mathcal{H}) \mapsto \Delta_d \quad \ketbra{\psi}{\psi} \mapsto \Theta(\ketbra{\psi}{\psi}) := \begin{pmatrix}
% p_1 \\
% p_2 \\
% \vdots \\
% p_d
% \end{pmatrix}, \; \ket{\psi} = \begin{pmatrix}
% \sqrt{p_1}e^{i\varphi_1} \\
% \sqrt{p_2}e^{i\varphi_2} \\
% \vdots \\
% \sqrt{p_d}e^{i\varphi_d}
% \end{pmatrix}, 
% \end{equation}
% was also used . %Interestingly, this argument shows that the phase factors only emerge   
\section{Conclusions}
\label{concl_sec}
In this work, we discussed various aspects of the connection between classical and quantum frameworks.
In particular, we provided an updated overview of the properties that lie between positivity and complete positivity,  introduced in the literature to shed light on the structure of the set of positive maps---a structure that is highly non-trivial in the noncommutative case. As discussed in detail in Sections~\ref{Markov_sec} and~\ref{pos_compl_pos_sec}, beyond their mathematical significance in operator algebra theory, many of these positivity properties naturally arise in contexts such as quantum dynamics and entanglement theory.
We further illustrated, in Section~\ref{class_quant_sec}, how to map quantum states and evolutions to classical ones and vice versa, which can be framed within the broader program of applying classical techniques in the quantum realm and quantum techniques in the classical realm, as outlined in Section~\ref{intro_sec}. 
We hope that this work will serve as a useful resource for both the mathematical and physical communities interested in the interplay between classical and quantum theories.

	\section*{Acknowledgements}
   % We are grateful for various comments from the Referee.
	% DA also thanks Federico Girotti for pointing out Reference~\cite{gartner2012coherent}.  AK acknowledges the support by the (Polish) National Science Center through the SONATA BIS Grant No. 2019/34/E/ST2/00369. AK also acknowledges the support from the ``Young Scientist" project at the Center for Theoretical Physics, Polish Academy of Sciences. 
DA and PF acknowledge support from INFN through the project ``QUANTUM'', from the Italian National Group of Mathematical Physics (GNFM-INdAM), and from the Italian funding within the ``Budget MUR - Dipartimenti di Eccellenza 2023--2027''  - Quantum Sensing and Modelling for One-Health (QuaSiModO). PF acknowledges support from Italian PNRR MUR project CN00000013 -``Italian National Centre on HPC, Big Data and Quantum Computing''. DA acknowledges support from PNRR MUR project PE0000023-NQSTI.

\endpaper
\end{document}